\documentclass[12pt]{article}

\usepackage{longtable, ltcaption}%
\usepackage{amsmath,amsthm,amssymb}
\allowdisplaybreaks[2]
\usepackage[margin=.8in]{geometry}
\usepackage{hyperref}
\hypersetup{
    colorlinks=true,
    linkcolor=blue,
    citecolor=blue,
    urlcolor=blue
}
\usepackage{setspace,placeins,booktabs}
\usepackage{siunitx}
\usepackage[flushleft]{threeparttable}
\usepackage{cleveref, graphicx, caption, subcaption}
\usepackage[utf8]{inputenc}
\usepackage{bm}
\usepackage{comment}
\usepackage{ragged2e}
\usepackage{tikz}
\usetikzlibrary{shapes,arrows}
\usepackage{xr}
\usepackage[style=authoryear, doi=false, url=true, backend=biber, maxbibnames=100, maxcitenames=4, uniquelist=false, uniquename=false, sorting=nyt]{biblatex}
\renewbibmacro{in:}{}
\DeclareNameAlias{default}{family-given/given-family}
\newcommand{\indicator}[1]{\mathbf{1}\{#1\}}
\newcommand\independent{\protect\mathpalette{\protect\independenT}{\perp}}
    \def\independenT#1#2{\mathrel{\setbox0\hbox{$#1#2$}%
    \copy0\kern-\wd0\mkern4mu\box0}}
\newcommand{\E}{\mathbb{E}}

\renewcommand{\L}{\textrm{L}}
\renewcommand{\P}{\text{P}}
\newcommand{\F}{\textrm{F}}
\renewcommand{\d}{\text{d}}
\newcommand{\ATT}{\text{ATT}}
\newcommand{\T}{\text{T}}

\newcommand{\point}[1]{}
\newcommand{\midbar}{\,\,\middle|\,\,}

\definecolor{uabgreen}{HTML}{007A33}

\newtheorem{theorem}{Theorem}

\newtheorem{lemma}{Lemma}

\newtheorem{assumption}{Assumption}

\newtheorem{proposition}{Proposition}
\newtheorem{condition}{Condition}

\newtheorem{inneruassumption}{Assumption}
\newenvironment{namedassumption}[1]
  {\renewcommand\theinneruassumption{#1}\inneruassumption}
  {\endinneruassumption}

\crefname{figure}{Figure}{Figures}
\crefname{assumption}{Assumption}{Assumptions}
\crefname{inneruassumption}{Assumption}{Assumptions}
\crefname{condition}{Condition}{Conditions}
\crefname{appendix}{Appendix}{Appendices}

\theoremstyle{definition}
\newtheorem{definition}{Definition}

\newtheorem{remark}{Remark}

\title{\textbf{Difference-in-differences with ``bad controls''}\footnote{Some of the results in this paper were originally in ``Difference-in-differences with time-varying covariates'' \parencite{caetano-callaway-payne-santanna-2022}.  This paper and our companion paper  ``Difference-in-differences when parallel trends holds conditional on covariates'' \parencite{caetano-callaway-2025} replace that paper.  Our Supplementary Appendix is available at \url{https://bcallaway11.github.io/files/badcontrols/CCPS_2026_SA_v1.pdf}.  An \texttt{R} package, \texttt{badcontrols} \parencite{badcontrols-2026}, implementing all proposed estimators is available at \url{https://github.com/hugosantanna/badcontrols}.}}

\author{Carolina Caetano\footnote{John Munro Godfrey, Sr.~Department of Economics, University of Georgia.  \href{mailto:carol.caetano@uga.edu}{carol.caetano@uga.edu}} \quad Brantly Callaway\footnote{John Munro Godfrey, Sr.~Department of Economics, University of Georgia.  \href{mailto:brantly.callaway@uga.edu}{brantly.callaway@uga.edu}} \quad Stroud Payne\footnote{Department of Economics, Vanderbilt University. \href{mailto:robert.s.payne@vanderbilt.edu}{robert.s.payne@vanderbilt.edu}} \quad Hugo Sant'Anna\footnote{Department of Marketing, Distribution, and Economics, University of Alabama at Birmingham. \href{mailto:hsantanna@uab.edu}{hsantanna@uab.edu}}}

\begin{document}

\maketitle

\abstract{\noindent This paper considers difference-in-differences identification strategies when the parallel trends assumption holds after conditioning on covariates that may themselves be affected by the treatment (often referred to as ``bad controls''). We show that common approaches such as simply dropping bad controls are often ill-advised and develop two alternative approaches that allow bad controls to function as genuine controls despite being affected by treatment. First, we derive explicit conditions that rationalize conditioning only on pre-treatment values of the bad control, leading naturally to the \textcite{callaway-santanna-2021} estimator with pre-treatment values as covariates. Second, under a covariate unconfoundedness condition, we develop imputation and double/debiased machine learning estimators that recover the average treatment effect on the treated.  We extend these results to staggered treatment adoption, provide pre-tests for the identifying assumptions, and apply the methods to study the effects of job displacement on earnings.}

\bigskip

\bigskip

\noindent {\bfseries {JEL Codes:}} C14, C21, C23

\bigskip

\noindent {\bfseries {Keywords:}}  Difference-in-Differences, Time Varying Covariates, Bad Control, Doubly Robust Estimation, Machine Learning, Imputation, Conditional Parallel Trends, Treatment Effect Heterogeneity

\vspace{100pt}

\clearpage

\normalsize

\onehalfspacing

\section{Introduction}

In this paper, we study difference-in-differences identification strategies where (i) the parallel trends assumption holds after conditioning on covariates, (ii) some or all of these covariates vary over time, and (iii) some of the time-varying covariates could themselves be affected by the treatment.  Covariates that could have been affected by participating in the treatment are often referred to as ``post-treatment'' or as ``bad controls.''  Classic examples of bad controls in labor economics include job-related covariates such as an individual's occupation, industry, or union status.  Whether to control for these variables involves a real tension.  On the one hand, the parallel trends assumption may be more plausible when it includes job-related characteristics, effectively comparing outcome paths for treated and untreated individuals in similar jobs.  On the other hand, if the treatment affects, say, an individual's occupation, then controlling for observed occupation is not the same thing as controlling for what a worker's occupation would have been absent the treatment.

The received wisdom and most common practice in empirical work is to drop bad controls from the analysis.\footnote{It is fairly common to include bad controls as a robustness check, but most main specifications do not include bad controls.  See \textcite{topel-1991,jacobson-lalonde-sullivan-1993,stevens-1997}, among many others, for labor applications that follow this approach.} For example, \textcite{angrist-pischke-2008}
note that ``...we would do better to control only for variables that are not themselves caused by [the treatment].''\footnote{\textcite{angrist-pischke-2008} discuss bad controls in a cross-sectional setting in the context of deciding whether to control for occupation when studying causal effects of graduating from college on earnings. In that case, occupation is likely to be affected by attending college and, therefore, can make comparisons in earnings among those with the same occupation who graduated or did not graduate from college hard to interpret, even if college were randomly assigned.} We agree that including a bad control directly as an additional covariate yields biased estimates of the average treatment effect on the treated ($\ATT$). However, simply excluding a bad control altogether can also be problematic.  The main issue is that it does not respect the original identification strategy.  The conditions that rationalize excluding a bad control amount to a revised parallel trends assumption that does not include the bad control at all, effectively saying that the bad control was never needed as a control in the first place.

We provide two alternative approaches to dealing with bad controls in difference-in-differences applications.  Unlike current common empirical practices, both of our approaches allow for the covariate to be affected by the treatment \textit{and} play a genuine role as a covariate (so that dropping it from the analysis is inappropriate).  First, we provide specific conditions under which it suffices to control for the pre-treatment value of the bad control.  For example, in the labor context discussed above, these conditions rationalize conditioning on pre-treatment occupation, industry, and union status.  Operationalizing this approach is straightforward, as one can simply apply the \textcite{callaway-santanna-2021} estimator using pre-treatment values of the bad control as covariates.  Second, we generalize the previous approach allowing for the possibility that the researcher needs to control for additional variables in order to account for how the treatment affects the bad control.  This leads to a more complicated estimand for the $\ATT$ involving nested conditional expectations.  In this case, we introduce new imputation, doubly robust, and double/debiased machine learning estimators.

At a high level, both of our new approaches use the following logic.  First, we treat the bad control as if it were an outcome and use an identification strategy to recover how the treatment affects the covariate (the exact details of this step are what differentiates the two approaches that we propose). From this step, we back out the distribution of the bad control absent the treatment. Then, in a second step, we use this backed-out distribution in the parallel trends assumption, effectively conditioning on the value of the bad control that would have occurred absent the treatment.

We conclude the paper with an application about the effects of job displacement on earnings, where we treat a worker's occupation score as a bad control.  First, we show that job displacement decreases workers' occupation score on average, in line with occupation score being a bad control.  Second, we show that the approaches that we introduce lead to roughly 30\% smaller estimates of the effect of job displacement on earnings relative to traditional approaches that directly include or altogether exclude the occupation score as a covariate.

The remainder of the paper is organized as follows.  \Cref{sec:setup} formalizes the setting that we consider and highlights challenges due to bad controls. \Cref{sec:traditional-approaches} characterizes the bias of the two most common empirical approaches---directly including or dropping the bad control.  \Cref{sec:identification} develops our two proposed approaches.  \Cref{sec:staggered} extends both approaches to staggered treatment adoption.  \Cref{sec:estimation-inference} develops imputation and doubly robust estimators and discusses machine learning estimation of nuisance functions.  \Cref{sec:application} applies our approaches to study the effects of job displacement on earnings.  \Cref{sec:conclusion} concludes.  Some proofs and additional results appear in the appendix, with the remaining ones provided in the Supplementary Appendix.

\subsubsection*{Related Literature} \label{lit-review}

The causal inference literature has long recognized that conditioning on variables that are themselves affected by the treatment can lead to biased estimates of target causal effect parameters (see \textcite{rosenbaum-1984} for a covariate affected by the treatment as well as \textcite{robins-greenland-1992,acharya-blackwell-sen-2016} for mediators).  In econometrics, similar to our paper, \textcite{lechner-2008,flores-lagunes-2009} use a notion of treated and untreated potential covariates (allowing for the covariates to be affected by the treatment) and propose approaches to control for counterfactual covariates.  These papers are both in the context of cross-sectional data under the assumption of unconfoundedness, while we consider a setting with panel data and where the main assumption is parallel trends.

In subsequent work to the first version of our paper, \textcite{shahn-etal-2025,renson-etal-2023} also consider difference-in-differences with time-varying covariates affected by treatment.  In these papers, for a given period, the time-varying covariate realizes, then the treatment, then the outcome.  This is in line with the statistics literature on time-varying treatments \parencite{robins-1986,robins-hernan-brumback-2000,blackwell-glynn-2018,hernan-robins-2020},\footnote{The broader time-varying treatments literature is also related to our setting, but is different in important ways.  First, in that literature, identification is mainly based on a sequential unconfoundedness assumption where treated and untreated potential outcomes are independent of treatment status conditional on pre-treatment values of covariates and possibly pre-treatment outcomes. This is in contrast to our paper, where the main identifying assumption is parallel trends.  Another difference between the current paper and much of the time-varying treatments literature is that these papers are typically primarily interested in recovering causal effects of different treatment paths (e.g., where each unit can move into or out of the treatment in each period), but often at the cost of introducing assumptions that limit treatment effect heterogeneity.  The arguments in our paper could likely be extended in this direction, but our main results are for cases with either exactly two periods or with staggered treatment adoption, both of which are common in the econometrics literature.} but it is different from the setting that we consider, where the treatment realizes, then the time-varying covariate, then the outcome.  Our setting is arguably more relevant for most social science applications and makes dealing with the bad control more challenging. \textcite{knaus-pfleiderer-2026} develop a graphical framework based on Single World Intervention Graphs (SWIGs) to formalize conditioning strategies under conditional parallel trends, and their framework can allow for bad controls.

Our paper is also broadly related to work that has used parallel trends or related assumptions in the context of mediation analysis \parencite{deuchert-huber-schelker-2019,huber-schelker-strittmatter-2022,blackwell-glynn-hilbig-phillips-2026}.  Like a mediator, the bad control in our paper can be affected by the treatment.  Unlike a mediator, however, our bad control plays an important role as a covariate for identifying effects of the treatment on the outcome.  The mediation literature is typically interested in decomposing treatment effects into direct effects of the treatment and indirect effects due to the effect of the treatment on the mediator (see \textcite{huber-2020} for a review of this literature).  Our paper is less ambitious on this front in that we only seek to account for the bad control being affected by the treatment, while identifying the overall effect of the treatment on outcomes. It would be interesting to extend our arguments to additionally identifying direct and indirect effects of the treatment, where the indirect effects occur through the effect of the treatment on the bad control.

Finally, \textcite{zeldow-hatfield-2021} provide simulations assessing the performance of different approaches to controlling for covariates in difference-in-differences applications, including cases with bad controls.  \textcite{brown-butts-westerlund-2026} decompose treatment effects into direct and indirect effects in a setting with panel data, bad controls, and where the potential outcomes exhibit an interactive fixed effects structure.

\section{Setup} \label{sec:setup}

This section introduces the notation that we use in the paper, introduces the parallel trends assumption that we use throughout the paper, and explains the challenges that arise for operationalizing the parallel trends assumption in the presence of a bad control.

\subsubsection*{Notation}

For most of our results, we consider a setting with two time periods, which we label $t^*$ and $t^*-1$.  In the first time period, no units participate in the treatment.  In the second period, some units become treated while other units remain untreated. Let $D_i$ denote whether or not a unit participates in the treatment.  Next, let $Y_{it}$ denote an individual's outcome in period $t$, and let $Y_{it}(1)$ and $Y_{it}(0)$ denote treated and untreated potential outcomes for unit $i$ in time period $t$.  Notice that $Y_{it^*-1} = Y_{it^*-1}(0)$ and $Y_{it^*} = D_i Y_{it^*}(1) + (1-D_i)Y_{it^*}(0)$; i.e., in the first period, we observe untreated potential outcomes for all units because no units have participated in the treatment yet, while in the second period, we observe treated potential outcomes for treated units and untreated potential outcomes for untreated units.  Next, we use $X_{it}$ to denote the bad control (for expositional clarity, we consider the case with a single bad control, but it is straightforward to allow $X_{it}$ to be a vector).  And, in order to allow for it to be affected by the treatment, we use $X_{it}(1)$ and $X_{it}(0)$ to denote treated and untreated potential covariates.  Like for the outcome, the observed bad control can be related to potential bad controls by $X_{it^*-1} = X_{it^*-1}(0)$ and $X_{it^*} = D_i X_{it^*}(1) + (1-D_i)X_{it^*}(0)$.  Finally, we define $Z_i$ to be a vector of other covariates that can include exogenous time-varying covariates or pre-treatment/baseline covariates.\footnote{The discussion above implicitly imposes a no-anticipation assumption, i.e., that treatments do not start affecting outcomes or the bad control before the treatment takes place.  See \textcite{callaway-2023} for more discussion.  We also implicitly impose SUTVA, i.e., that the potential outcomes are well-defined and do not depend on the treatments of other units.  Finally, throughout the paper we assume that all expectations exist and that statements about conditional expectations should be taken to hold almost surely.}

\begin{assumption}[Random Sampling] \label{ass:sampling}
The observed data $\{Y_{it^*}, Y_{it^*-1}, X_{it^*}, X_{it^*-1}, Z_i, D_i\}_{i=1}^n$ are independent and identically distributed.
\end{assumption}

Following the vast majority of the difference-in-differences literature, we target identifying the average treatment effect on the treated ($\ATT$), given by
\begin{align*}
    \ATT := \E[Y_{t^*}(1) - Y_{t^*}(0) \mid D=1]
\end{align*}

Next, we move to providing a formal definition of a bad control.  Consider the following conditions:

\begin{condition}[Outcome Relevance] \label{cond:relevance}
    For $d \in \{0,1\}$,
    \begin{align*}
        \E[\Delta Y_{t^*}(0) \mid X_{t^*}(0), X_{t^*-1}, Z, D=d] \neq \E[\Delta Y_{t^*}(0) \mid Z, D=d].
    \end{align*}
\end{condition}

\begin{condition}[Affected by the Treatment] \label{cond:cov-affected-by-treatment}
\begin{align*}
\Big(X_{t^*}(0) \Bigm| X_{t^*-1}, Z, D=1\Big) \not \sim \Big(X_{t^*}(1) \Bigm| X_{t^*-1}, Z, D=1\Big)
\end{align*}
\end{condition}

\Cref{cond:relevance} says that $X_t$ actually affects the path of untreated potential outcomes.  This is essentially a definition of what it means for $X_t$ to be a covariate---if the path of untreated potential outcomes does not depend on $X_t$, then there is no reason to include $X_t$ in the analysis.  \Cref{cond:cov-affected-by-treatment} says that, conditional on the pretreatment covariates $(X_{t^*-1},Z)$, the distribution of the untreated potential covariate $X_{t^*}(0)$ is different from the distribution of the treated potential covariate $X_{t^*}(1)$.  In other words, it says that $X_{t^*}$ is affected by the treatment.  It is also worth comparing this condition to its complement that $\big(X_{t^*}(0) \mid X_{t^*-1}, Z, D=1\big) \sim \big(X_{t^*}(1) \mid X_{t^*-1}, Z, D=1\big)$, which \textcite{lechner-2011,caetano-callaway-2025} refer to as \textit{covariate exogeneity}.  Covariate exogeneity allows a covariate to evolve over time, but rules out any kind of systematic effect of the treatment on the covariate. A leading case where it would hold is under the condition that $X_{it^*}(1) = X_{it^*}(0)$ for all units.  \Cref{cond:cov-affected-by-treatment} relaxes covariate exogeneity, allowing the treatment to affect the covariate too.

\begin{definition}[Bad Control] \label{def:bad-control}
$X_{t}$ is a \emph{bad control} if it satisfies both \Cref{cond:relevance,cond:cov-affected-by-treatment}.
\end{definition}

\Cref{def:bad-control} provides a natural, formal definition of a bad control as a time-varying variable that is relevant for the trend in untreated potential outcomes while also being affected by the treatment.

\bigskip

Next, we introduce the parallel trends assumption, which is our main identifying assumption, as well as an overlap assumption:

\begin{assumption}[Conditional Parallel Trends] \label{ass:conditional-parallel-trends}
\begin{align*}
    \E[\Delta Y_{t^*}(0) \mid X_{t^*}(0), X_{t^*-1}, Z, D=1] = \E[\Delta Y_{t^*}(0) \mid X_{t^*}(0), X_{t^*-1}, Z, D=0].
\end{align*}
\end{assumption}

\begin{assumption}[Overlap] \label{ass:overlap} $\P(D=1 \mid X_{t^*}(0), X_{t^*-1}, Z) < 1$.
\end{assumption}

Assumption \ref{ass:conditional-parallel-trends} says that, on average, the path of untreated potential outcomes is the same for the treated group as for the untreated group after conditioning on untreated potential bad controls in both periods, $X_{t^*}(0)$ and $X_{t^*-1}$, and exogenous covariates, $Z$.  In other words, if both groups had remained untreated, then they would follow parallel trends conditional on all covariates.  This is a natural assumption that is in line with the literature when $X_t$ is not affected by the treatment \parencite{heckman-ichimura-todd-1997,abadie-2005,callaway-santanna-2021,caetano-callaway-2025} except that \Cref{ass:conditional-parallel-trends} is compatible with $X_t$ being affected by the treatment.  The untreated potential covariate $X_{t^*}(0)$ being unobserved for the treated group leads to substantive complications relative to existing identification strategies, which we discuss in detail below.  \Cref{ass:overlap} is a standard overlap condition that is common in the DiD literature (e.g., \textcite{abadie-2005}).  It implies that, for any value of the covariates (including the bad control), there exist untreated units that have those characteristics.  Without this assumption, there might not exist viable comparison units for some treated units.  Next, define\footnote{To avoid cluttered notation (though slightly abusing notation), we use $m_0(\cdot)$ to denote $\E[\Delta Y_{t^*} \mid \cdot \ , D=0]$ throughout the paper, allowing for the exact argument to change depending on the context.}
\begin{align} \label{eqn:m0}
    m_0(X_{t^*}(0),X_{t^*-1},Z) &:= \E[\Delta Y_{t^*} \mid X_{t^*}(0), X_{t^*-1}, Z, D=0],
\end{align}
which is the average change in outcomes conditional on the untreated potential bad control and exogenous covariates for the untreated group.  Since untreated potential covariates are observed for the untreated group, this term is identified.  Using standard identification arguments for identifying the $\ATT$ under \Cref{ass:conditional-parallel-trends}, it follows that
\begin{align} \label{eqn:unidentified-ATT}
    \ATT = \E[\Delta Y_{t^*}\mid D=1] - \E\left[m_0\left(X_{t^*}(0), X_{t^*-1}, Z\right) \mid D=1\right]
\end{align}
Although this is a correct expression for the $\ATT$, it does not imply that $\ATT$ is identified.  Even though the $m_0$ term is identified, it is infeasible to average it over the distribution of untreated potential covariates for the treated group.  This implies that, in general, the $\ATT$ is not identified when $X_t$ is a bad control.

To understand the arguments below, it is helpful to rewrite the preceding display in integral form:
{ \small
\begin{align} \label{eqn:att-missing-dist}
    \ATT &= \E[\Delta Y_{t^*} \mid D=1]
    - \int\int m_0\big(x_{t^*}(0),x_{t^*-1},z\big)\, \d\underbrace{\tilde{\F}_1(x_{t^*}(0) \mid x_{t^*-1}, z)} \, \d\F_1(x_{t^*-1},z)
\end{align}
}where $\tilde{\F}_1(x_{t^*}(0) \mid x_{t^*-1}, z) := \F_{X_{t^*}(0) \mid X_{t^*-1}, Z, D=1}(x_{t^*}(0) \mid x_{t^*-1}, z)$ and $\F_1(x_{t^*-1},z) := \F_{X_{t^*-1}, Z \mid D=1}(x_{t^*-1},z)$.\footnote{Similar to the notation $m_0$ above, we more generally use the notation $\F_1$ and $\F_0$ throughout the paper to denote cdfs conditional on $D=1$ and $D=0$, respectively, using the arguments to make it clear exactly what cdfs these represent.}
The problematic term in the previous display is the underlined one---because untreated potential covariates are not observed for the treated group, $\tilde{\F}_1(x_{t^*}(0) \mid x_{t^*-1}, z)$ is not identified.

An important implication of the discussion above is that, in general, even if conditional parallel trends holds, bad controls can lead to identification failure for the $\ATT$.  The approaches that we discuss below all have the common feature that they either try to recover $\tilde{\F}_1(x_{t^*}(0) \mid x_{t^*-1}, z)$ or involve layering on additional assumptions that make it irrelevant that this distribution is not known.  It is a real possibility, however, in any given application, that none of the sort of assumptions that we consider below for the bad control are plausible, leaving the $\ATT$ unidentified.

\begin{figure}[th]
    \centering
    \caption{Unrestricted Causal Graphs of DiD with Bad Controls}
    \label{fig:causal_path_bad_controls}

    \begin{subfigure}[b]{.48\textwidth}
        \centering
        \scalebox{1.25}{
        \begin{tikzpicture}[
                every node/.style={font=\normalsize},
                dashedred/.style={draw=red, dashed, line width=0.4pt},
                arrowred/.style={->, red, dashed, line width=0.4pt},
                arrow/.style={->, line width=0.4pt}
            ]

            \node (x1) at (-1,-1.75) {$X_{t^*-1}$};
            \node (z)  at ( 2.5,-1.75) {$Z$};
            \node (x2) at ( 1.71,0.342) {$X_{t^*}(0)$};
            \node (dy) at ( 4.8,0.342) {$\Delta Y_{t^*}(0)$};

            \draw[arrow] (x1) -- (x2);
            \draw[arrow] (z)  -- (x2);
            \draw[arrow] (x2) -- (dy);
            \draw[arrow] (x1) -- (dy);
            \draw[arrow] (z)  -- (dy);
        \end{tikzpicture}
        }
        \caption{SWIG with $D=0$}
    \end{subfigure}
    \hfill
        \begin{subfigure}[b]{.48\textwidth}
                    \centering
                    \scalebox{1.25}{
                    \begin{tikzpicture}[
                every node/.style={font=\normalsize},
                dashedred/.style={draw=red, dashed, line width=0.4pt},
                arrowred/.style={->, red, dashed, line width=0.4pt},
                arrow/.style={->, line width=0.4pt}
            ]

                \node (v0) at ( 1.71,0.342) {$X_{t^*}$};
                \node (v1) at (-0.777,0.297) {$D$};
                \node (v2) at ( 0.720,-1.58) {$Z$};
                \node[draw, dashed] (v3) at (-0.781,1.67) {$W$}; %
                \node (v4) at (-2.98,-1.51) {$X_{t^*-1}$};

                \draw[arrow] (v2) edge (v1);
                \draw[arrow] (v4) edge (v1);
                \draw[arrow] (v3) edge (v1);

                \draw[arrow] (v2) edge (v0);
                \draw[arrow] (v4) edge (v0);
                \draw[arrow] (v3) edge (v0);
                \draw[arrow] (v1) edge (v0);

            \end{tikzpicture}
        }
        \caption{DAG for $X_{t^*}$}
    \end{subfigure}

    \begin{justify}
    {\small \textit{Notes:}
        Panel (a) shows the SWIG for the untreated state $D=0$ and corresponds to the parallel trends assumption discussed in \Cref{ass:conditional-parallel-trends}.  Panel (b) provides the DAG for $X_{t^*}$, where the dashed box around $W$ indicates that it is an unobserved confounder.}
    \end{justify}
\end{figure}

Next, we provide causal graphs describing our setting of difference-in-differences with a bad control in \Cref{fig:causal_path_bad_controls}.  Panel (a) contains a Single World Intervention Graph (SWIG, \textcite{richardson-robins-2013}).  The SWIG is essentially the graphical version of the parallel trends assumption.  It describes the counterfactual world where no units participate in the treatment.  It shows that $X_{t^*-1}$ and $Z$ affect $X_{t^*}(0)$, and that all three of $X_{t^*-1}$, $Z$, $X_{t^*}(0)$ affect $\Delta Y_{t^*}(0)$, and that this structure is common for the treated group and untreated group---if we were able to condition on $(X_{t^*}(0), X_{t^*-1}, Z)$, then the average trend in untreated potential outcomes would be the same across groups.\footnote{It is not possible to represent \Cref{ass:conditional-parallel-trends} in a standard DAG because (i) it involves untreated potential outcomes only, and (ii) it requires conditioning on $X_{t^*}(0)$ which can be different from $X_{t^*}$ in our setting.}  Panel (b) provides a Directed Acyclic Graph (DAG) for $X_{t^*}$.  In the DAG, $X_{t^*-1}$ and $Z$ are observed confounders for $X_{t^*}$, while $W$ are unobserved confounders (we use the convention of putting a dashed box around unobserved confounders).  The arrow from $D$ to $X_{t^*}$ and the presence of $W$ are what leads to $X_{t}$ being a bad control and are the sources of the identification failure of the $\ATT$ discussed above.

\FloatBarrier

\section{Traditional Approaches to Bad Controls} \label{sec:traditional-approaches}

This section examines the two traditional approaches to dealing with a bad control in empirical work---including the bad control directly, or dropping it entirely.  At a high level, we show that each approach is hard to rationalize, as each effectively requires $X_t$ not to be a bad control at all, either requiring $X_{t^*}$ not to have been affected by the treatment or not to affect the path of untreated potential outcomes.

\subsection{Approach 1: Use the Bad Control} \label{sec:approach-use-bad-control}

A first approach is to simply include the post-treatment covariate $X_{t^*}$ as if it were an ordinary control variable. This is exactly the strategy that is criticized in \textcite{angrist-pischke-2008}.  This leads to the following estimand
\begin{align*}
    \tau^{\text{use}} &:= \E[\Delta Y_{t^*}\mid D=1] - \E\left[m_0\left(X_{t^*}, X_{t^*-1}, Z\right) \mid D=1\right]\\
    &= \E[\Delta Y_{t^*} \mid D=1] - \int \int m_0(x_{t^*}, x_{t^*-1}, z) \, \d\F_1(x_{t^*} \mid x_{t^*-1}, z) \, \d\F_1(x_{t^*-1}, z)
\end{align*}
where $\F_1(x_{t^*} \mid x_{t^*-1}, z) := \F_{X_{t^*}\mid X_{t^*-1}, Z, D=1}(x_{t^*} \mid x_{t^*-1}, z)$.  $\tau^{\text{use}}$ comes from replacing the unobserved distribution of untreated potential covariates for the treated group, $\tilde{\F}_1(x_{t^*}(0) \mid x_{t^*-1}, z)$, in the expression for $\ATT$ in \Cref{eqn:unidentified-ATT} with the observed distribution of covariates for the treated group.  It is also straightforward to see the bias from using $\tau^{\text{use}}$ when $X_{t^*}$ is a bad control.  Notice that
\begin{align*}
    \tau^{\text{use}} - \ATT &= \int \int m_0(x_{t^*}(0), x_{t^*-1}, z) \, \d\Big( \tilde{\F}_1(x_{t^*}(0) \mid x_{t^*-1}, z) - \F_1(x_{t^*} \mid x_{t^*-1}, z) \Big) \, \d\F_1(x_{t^*-1}, z)
\end{align*}
which is non-zero when the distribution of $X_{t^*}(0)$ differs from the distribution of $X_{t^*}(1)$ for the treated group---this corresponds exactly to \Cref{cond:cov-affected-by-treatment} holding. Thus, this approach leads to bias whenever $X_{t^*}$ is a bad control.  \Cref{fig:dag-unaffected-covariate} shows the DAG for $X_{t^*}$ when $\tau^{\text{use}} = \ATT$.  In this and all subsequent figures that contain DAGs or SWIGs, we use red to highlight changes relative to \Cref{fig:causal_path_bad_controls}; we use dashes to indicate the removal of arrows.  Thus, the red dashed arrow from $D$ to $X_{t^*}$ indicates that the causal pathway from $D$ to $X_{t^*}$ has been removed, which, in line with the discussion above, results in $X_{t^*}$ no longer being a bad control.

\begin{figure}[th]
    \caption{DAG Rationalizing Using $X_{t^*}$ Directly as a Covariate}
    \label{fig:dag-unaffected-covariate}
    \begin{center}
    \scalebox{1.25}{
    \begin{tikzpicture}[
            every node/.style={font=\normalsize},
            dashedred/.style={draw=red, dashed, line width=0.4pt},
            arrowred/.style={->, red, dashed, line width=0.4pt},
            arrow/.style={->, line width=0.4pt}
        ]
    \node (v0) at (1.71,0.342) {$X_{t^*}$};
    \node (v1) at (-0.777,0.297) {$D$};
    \node (v2) at (0.720,-1.58) {$Z$};
    \node[draw, dashed] (v3) at (-0.781,1.67) {$W$};
    \node (v4) at (-2.98,-1.51) {$X_{t^*-1}$};
    \draw[arrow] (v2) edge (v1);
    \draw[arrow] (v4) edge (v1);
    \draw[arrowred] (v1) edge (v0);
    \draw[arrow] (v3) edge (v1);
    \draw[arrow] (v2) edge (v0);
    \draw[arrow] (v4) edge (v0);
    \draw[arrow] (v3) edge (v0);
    \end{tikzpicture}
    }
    \end{center}
    \begin{justify}
        {\small \textit{Notes:}  The figure modifies Panel (b) of \Cref{fig:causal_path_bad_controls} in a way that rationalizes the approach in this section.  The dashed red arrow from $D$ to $X_{t^*}$ indicates that this DAG removes the causal pathway from $D$ to $X_{t^*}$ that was in Panel (b) of \Cref{fig:causal_path_bad_controls}.
        }
    \end{justify}
\end{figure}

\FloatBarrier

\subsection{Approach 2: Discard Bad Controls} \label{sec:approach-discard-bad-control}

The second approach that we consider is dropping the bad control from the analysis.  This is the most common approach for dealing with bad controls in empirical work in economics.  A common rationalization for this approach is
that by avoiding $X_{t^*}$ altogether, the researcher conditions only on exogenous controls.

Dropping the bad control from the analysis leads to the following estimand:
\begin{align*}
    \tau^{\text{discard}} &:= \E[\Delta Y_{t^*}\mid D=1] - \E\left[m_0\left(Z\right) \mid D=1\right]\\
    &= \E[\Delta Y_{t^*} \mid D=1] - \int m_0(z) \d\F_1(z)
\end{align*}
It is immediately clear that this expression is different from the one for the $\ATT$ in \Cref{eqn:unidentified-ATT}.  The key difference is that, in this case, the trend in untreated potential outcomes is implicitly assumed to only depend on exogenous covariates $Z$, not on the bad control.  Figure~\ref{fig:swig_no_bad_controls} shows the implied SWIG.  In line with the preceding discussion, the SWIG drops the arrows from $X_{t^*-1}$ and $X_{t^*}(0)$ to $\Delta Y_{t^*}(0)$.

\begin{figure}[h!]
    \caption{SWIG Rationalizing Discarding $X_t$}
    \label{fig:swig_no_bad_controls}
    \begin{center}
    \scalebox{1.25}{
        \begin{tikzpicture}[
                every node/.style={font=\normalsize},
                dashedred/.style={->, red, dashed, line width=0.4pt},
                arrowred/.style={->, red, dashed, line width=0.4pt},
                arrow/.style={->, line width=0.4pt}
            ]

    \node (x1) at (-1,-1.75) {$X_{t^*-1}$};
    \node (z) at (2.5,-1.75) {$Z$};
    \node (x2) at (1.71,0.342) {$X_{t^*}(0)$};
    \node (dy) at (4.8,0.342) {$\Delta Y_{t^*}(0)$};

    \draw[arrow] (x1) -- (x2);
    \draw[dashedred] (x2) -- (dy);
    \draw[dashedred] (x1) -- (dy);
    \draw[arrow] (z) -- (x2);
    \draw[arrow] (z) -- (dy);
    \end{tikzpicture}
    }
    \end{center}
    \begin{justify}
        {\small \textit{Notes:}
            The figure modifies the SWIG in Panel (a) of \Cref{fig:causal_path_bad_controls} to rationalize dropping the bad control from the analysis altogether.  The red dashed arrows from $X_{t^*}(0)$ and $X_{t^*-1}$ to $\Delta Y_{t^*}(0)$ indicate that the causal pathway from each of these to the change in untreated potential outcomes is removed relative to Panel (a) of \Cref{fig:causal_path_bad_controls}.
        }
    \end{justify}
\end{figure}

It is also straightforward to derive the bias of $\tau^{\text{discard}}$ under \Cref{ass:sampling,ass:conditional-parallel-trends,ass:overlap}.  In particular,
\begin{align*}
    \tau^{\text{discard}} - \ATT &=  \int \int \Big( m_0(x_{t^*}(0), x_{t^*-1}, z) - m_0(z) \Big) \, \d\tilde{\F}_1(x_{t^*}(0) \mid x_{t^*-1}, z) \, \d\F_1(x_{t^*-1}, z)
\end{align*}
The bias is equal to 0 when $m_0(Z) = m_0(X_{t^*}(0),X_{t^*-1},Z)$, which arises only in the case where $X_t$ does not affect the change in untreated potential outcomes. However, notice that this violates the definition of a bad control---$X_t$ is not needed as a covariate at all.  In other words, whenever $X_t$ is a bad control, dropping it from the analysis does not eliminate bias.  It only changes the form of bias relative to including the bad control.

\begin{remark}[Implications for Dealing with Bad Controls in Different Settings]
    The discussion above effectively applies to cross-sectional settings with little modification, except that there would be one period $t^*$ and by replacing $\Delta Y_{t^*}$ with $Y_{t^*}$.  The approaches that we discuss below rely on the researcher having access to panel data and would not be available (at all) with cross-sectional data and (to a large extent) with repeated cross-sections data.  Therefore, an important implication of our results is that a researcher's ability to deal with important covariates being affected by the treatment depends on the setting and nature of data that is available.
\end{remark}

\section{New Approaches to Bad Controls} \label{sec:identification}

In the previous section, we showed that the two most common approaches to dealing with bad controls essentially violate the definition of a bad control, suggesting that neither approach is attractive for applications with bad controls.  In this section, we discuss two alternative approaches for dealing with bad controls that respect the concept of a bad control by allowing it to be both affected by the treatment and to play an important role as a covariate.

In line with the discussion in the previous section, the key issue is recovering the distribution $\big(X_{t^*}(0) \mid X_{t^*-1}, Z, D=1 \big)$.  This requires additional assumptions.  We propose two approaches.  The first leads to a standard DiD estimand for the $\ATT$ that involves conditioning on $(X_{t^*-1}, Z)$, i.e., the pre-treatment value of the bad-control along with the other exogenous covariates.  Second, we propose a generalized (weaker) assumption that leads to a more complicated estimand.  In both cases, at a high level, our approach amounts to a two-step identification argument.  In the first step, we effectively treat $X_{t^*}$ as an outcome and use an identification strategy to recover its counterfactual distribution.  In the second step, having recovered its counterfactual distribution allows us to effectively control for it in the parallel trends assumption for $\Delta Y_{t^*}(0)$.

\subsection{New Approach 1: Condition on Pre-Treatment Value of the Bad Control} \label{sec:good-pre-only}

To start with, we introduce the main assumption that we use in this section:

\begin{assumption}[Simple Covariate Unconfoundedness] \label{ass:simple-cov-unc}
\begin{align*}
    X_{t^*}(0) \independent D \mid X_{t^*-1}, Z
\end{align*}
\end{assumption}

\Cref{ass:simple-cov-unc} is an unconfoundedness assumption but where $X_{t^*}$ is the outcome.  It says that the untreated potential covariate $X_{t^*}(0)$ is independent of treatment status conditional on $(X_{t^*-1}, Z)$.  Notice that \Cref{ass:simple-cov-unc} allows $X_t$ to be a genuine bad control.  We have not modified \Cref{ass:conditional-parallel-trends}, so it still affects the path of untreated potential outcomes.  It also does not put any restrictions on $X_{t^*}(1)$.  Therefore, the treatment can affect $X_{t^*}$ in arbitrary ways.  What it requires is that we can learn about how untreated potential covariates, $X_{t^*}(0)$, would have evolved absent the treatment by looking at the untreated group with the same pre-treatment value of the bad control and other exogenous covariates, $(X_{t^*-1},Z)$.  Under this additional assumption, the $\ATT$ is identified, as we show in the next theorem.

\begin{theorem} \label{thm:att-pretreatment-covs} Under \Cref{ass:sampling,ass:overlap,ass:conditional-parallel-trends,ass:simple-cov-unc},
    \begin{align*}
        \ATT = \E[\Delta Y_{t^*} \mid D=1] - \E\big[ m_0(X_{t^*-1},Z) \mid D=1 \big]
    \end{align*}
\end{theorem}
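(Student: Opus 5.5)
The plan is to begin from the expression for the $\ATT$ in \Cref{eqn:unidentified-ATT}, which follows from \Cref{ass:conditional-parallel-trends,ass:overlap}. It suffices to show that
\begin{align*}
\E\big[m_0(X_{t^*}(0),X_{t^*-1},Z) \mid D=1\big] = \E\big[m_0(X_{t^*-1},Z)\mid D=1\big],
\end{align*}
where, following the paper's notation, $m_0(X_{t^*-1},Z) := \E[\Delta Y_{t^*} \mid X_{t^*-1},Z,D=0]$.

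The first step is to write the left-hand side in integral form as in \Cref{eqn:att-missing-dist}. The inner integral is then against $\tilde{\F}_1(x_{t^*}(0)\mid x_{t^*-1},z)$, the conditional distribution of $X_{t^*}(0)$ given $(X_{t^*-1},Z)$ for the treated group. \Cref{ass:simple-cov-unc} says exactly that this conditional distribution does not depend on $D$. Hence
\begin{align*}
\tilde{\F}_1(x_{t^*}(0)\mid x_{t^*-1},z) = \F_{X_{t^*}(0)\mid X_{t^*-1},Z,D=0}(x_{t^*}(0)\mid x_{t^*-1},z) = \F_0(x_{t^*}\mid x_{t^*-1},z).
\end{align*}
The last equality uses $X_{t^*}=X_{t^*}(0)$ on $\{D=0\}$, so this distribution is identified.

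The second step substitutes this into the inner integral:
\begin{align*}
\int m_0(x_{t^*},x_{t^*-1},z)\,\d\F_0(x_{t^*}\mid x_{t^*-1},z) = \E\big[\E[\Delta Y_{t^*}\mid X_{t^*},X_{t^*-1},Z,D=0] \midbar X_{t^*-1}=x_{t^*-1},Z=z,D=0\big].
\end{align*}
By the law of iterated expectations within the untreated group, the right-hand side equals $m_0(x_{t^*-1},z)$. Integrating against $\d\F_1(x_{t^*-1},z)$ then gives $\E[m_0(X_{t^*-1},Z)\mid D=1]$, which completes the argument.

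The main obstacle is support and overlap rather than the algebra. For $m_0(x_{t^*-1},z)$ to be well defined on the support of $(X_{t^*-1},Z)\mid D=1$, we need $\P(D=1\mid X_{t^*-1},Z)<1$. For the replacement of $\tilde{\F}_1$ by $\F_0$ to be legitimate, $m_0(\cdot,x_{t^*-1},z)$ must be evaluated only where untreated units exist. I will derive both facts from \Cref{ass:overlap}. First, integrating $\P(D=0\mid X_{t^*}(0),X_{t^*-1},Z)>0$ over $X_{t^*}(0)$ gives overlap given $(X_{t^*-1},Z)$. Second, \Cref{ass:simple-cov-unc} ensures that the treated and untreated supports of $X_{t^*}(0)$ coincide conditional on $(X_{t^*-1},Z)$. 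All equalities are to be read almost surely.
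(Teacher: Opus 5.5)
Your proposal is correct and follows essentially the same route as the paper: start from the representation involving $m_0(X_{t^*}(0),X_{t^*-1},Z)$ (iterated expectations plus conditional parallel trends), replace the treated conditional distribution of $X_{t^*}(0)$ given $(X_{t^*-1},Z)$ with the untreated one via simple covariate unconfoundedness, and collapse the nested expectation by iterated expectations within the untreated group. The paper writes this as a chain of nested conditional expectations rather than integrals and does not spell out the support/overlap bookkeeping, so your overlap remarks are a correct addition that makes the argument slightly more careful.
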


The proof is provided in \Cref{app:proofs}.  The expression for the $\ATT$ in \Cref{thm:att-pretreatment-covs} is a standard DiD estimand \parencite{heckman-ichimura-todd-1997,abadie-2005,callaway-santanna-2021}, but where the covariates include the pre-treatment bad control and other exogenous covariates.  An important implication of this is that existing arguments and tools from the DiD literature apply immediately based on this result, with the only differences amounting to the setting (bad control dealt with by \Cref{ass:simple-cov-unc}) and paying careful attention to which covariates end up being used.  Panel (a) of \Cref{fig:approach3_conditions} shows the DAG for $X_{t^*}$ under this condition.  The key restriction is that it rules out other confounders besides $X_{t^*-1}$ and $Z$ for $X_{t^*}$.

The expression for $\ATT$ in \Cref{thm:att-pretreatment-covs} is also interesting because, despite $X_t$ being a genuine bad control, the expression does not involve $X_{t^*}$ (the bad control in the post-treatment period) at all.  The explanation for this is that, under \Cref{ass:simple-cov-unc}, conditional on $(X_{t^*-1},Z)$, $X_{t^*}(0)$ is balanced between the treated and untreated group.  Because it is already balanced, it does not need to be explicitly controlled for, and controlling for $(X_{t^*-1},Z)$ suffices.

\begin{figure}[htb!]
    \centering
    \caption{Causal Graphs for Approach 3}
    \label{fig:approach3_conditions}

    \begin{subfigure}[b]{.48\textwidth}
        \centering
        \scalebox{1.25}{
        \begin{tikzpicture}[
                every node/.style={font=\normalsize},
                dashedred/.style={draw=red, dashed, line width=0.4pt},
                dashedbox/.style={draw, dashed, line width=0.4pt},
                arrowred/.style={->, red, dashed, line width=0.4pt},
                arrow/.style={->, line width=0.4pt}
            ]
            \node (v0) at ( 1.71, 0.342) {$X_{t^*}$};
            \node (v1) at (-0.777, 0.297) {$D$};
            \node (v2) at ( 0.720,-1.58 ) {$Z$};
            \node[dashedred] (v3) at (-0.781, 1.67 ) {$W$};
            \node (v4) at (-2.98, -1.51 ) {$X_{t^*-1}$};

            \draw[arrow] (v2) edge (v1);
            \draw[arrow] (v4) edge (v1);
            \draw[arrow] (v1) edge (v0);
            \draw[arrow] (v2) edge (v0);
            \draw[arrow] (v4) edge (v0);

            \draw[arrowred] (v3) edge (v1);
            \draw[arrowred] (v3) edge (v0);
        \end{tikzpicture}
        }
        \caption{DAG: \Cref{ass:simple-cov-unc}}
    \end{subfigure}
    \hfill
    \begin{subfigure}[b]{.48\textwidth}
        \centering
        \scalebox{1.25}{
        \begin{tikzpicture}[
                every node/.style={font=\normalsize},
                dashedred/.style={draw=red, dashed, line width=0.4pt},
                arrowred/.style={->, red, dashed, line width=0.4pt},
                arrow/.style={->, line width=0.4pt}
            ]

            \node (x1) at (-1,-1.75) {$X_{t^*-1}$};
            \node (z)  at ( 2.5,-1.75) {$Z$};
            \node (x2) at ( 1.71,0.342) {$X_{t^*}(0)$};
            \node (dy) at ( 4.8,0.342) {$\Delta Y_{t^*}(0)$};

            \draw[arrow] (x1) -- (x2);
            \draw[arrow] (z)  -- (x2);
            \draw[arrowred] (x2) -- (dy);
            \draw[arrow] (x1) -- (dy);
            \draw[arrow] (z)  -- (dy);
        \end{tikzpicture}
        }
        \caption{SWIG: \Cref{ass:outcome-independence}}
    \end{subfigure}
    \begin{justify}
        {\small \textit{Notes:} Panel (a) modifies the DAG in Panel (b) of \Cref{fig:causal_path_bad_controls} so that \Cref{ass:simple-cov-unc} holds.  The red dashed arrows from $W$ to $D$ and $X_{t^*}$ indicates that the causal pathway from $W$ to each of these is removed.  Panel (b) modifies the SWIG in Panel (a) of \Cref{fig:causal_path_bad_controls} so that \Cref{ass:outcome-independence} holds.  The red dashed arrow from $X_{t^*}(0)$ to $\Delta Y_{t^*}(0)$ indicates that this causal pathway is removed.}
    \end{justify}
\end{figure}

\subsubsection{Alternative Conditions for the Same Estimand}

One can also arrive at the same estimand as in \Cref{thm:att-pretreatment-covs} using alternative assumptions.  In particular, consider the following assumption:
\begin{assumption}[Bad Control Redundancy] \label{ass:outcome-independence}
\begin{align*}
\E[\Delta Y_{t^*}(0) \mid X_{t^*}(0), X_{t^*-1}, Z, D=0]
= \E[\Delta Y_{t^*}(0) \mid X_{t^*-1}, Z, D=0]
\end{align*}
\end{assumption}

\Cref{ass:outcome-independence} says that, after conditioning on the bad control in the pre-treatment period and exogenous covariates, the path of untreated potential outcomes does not depend on the bad control in post-treatment periods.  This is a kind of dimension reduction assumption that is common in settings with exogenous time-varying covariates (see \textcite{callaway-santanna-2021} and \textcite[Supplementary Appendix SA 2.7]{caetano-callaway-2025}).  In the context of job displacement, it implies that controlling for untreated potential occupation in the post-treatment period is redundant/unnecessary as long as one controls for pre-treatment occupation and other exogenous covariates.  This assumption is non-nested with \Cref{ass:simple-cov-unc}, but, like that assumption, it allows for $X_t$ to be a genuine bad control---it can be affected by the treatment in post-treatment periods and the path of untreated potential outcomes can depend on it.  Panel (b) of \Cref{fig:approach3_conditions} shows the SWIG that includes this assumption, where the key restriction is that the arrow from from $X_{t^*}(0)$ to $\Delta Y_{t^*}(0)$ is removed,  indicating that the post-treatment value of the bad control does not directly affect the $\Delta Y_{t^*}(0)$ once $(X_{t^*-1}, Z)$ are controlled for.  The following proposition shows that the $\ATT$ is identified when \Cref{ass:simple-cov-unc} is replaced by \Cref{ass:outcome-independence} and gives the same estimand as in \Cref{thm:att-pretreatment-covs}.
\begin{proposition} \label{prop:att-pretreatment-alt-assumptions}
    Under \Cref{ass:sampling,ass:overlap,ass:conditional-parallel-trends,ass:outcome-independence},
    \begin{align*}
        \ATT = \E[\Delta Y_{t^*} \mid D=1] - \E\big[ m_0(X_{t^*-1},Z) \mid D=1 \big]
    \end{align*}
\end{proposition}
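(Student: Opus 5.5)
We start from \Cref{eqn:unidentified-ATT}, which follows from \Cref{ass:conditional-parallel-trends} alone:
\begin{align*}
\ATT = \E[\Delta Y_{t^*}\mid D=1] - \E\left[m_0\left(X_{t^*}(0), X_{t^*-1}, Z\right) \mid D=1\right],
\end{align*}
where $m_0(X_{t^*}(0),X_{t^*-1},Z)=\E[\Delta Y_{t^*}(0)\mid X_{t^*}(0),X_{t^*-1},Z,D=0]$. For the untreated group $\Delta Y_{t^*}=\Delta Y_{t^*}(0)$ and $X_{t^*}=X_{t^*}(0)$, so this function is identified. \Cref{ass:overlap} ensures it is well defined on the support of $(X_{t^*}(0),X_{t^*-1},Z)$ for the treated group. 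The aim is to eliminate the unidentified inner distribution $\tilde{\F}_1(x_{t^*}(0)\mid x_{t^*-1},z)$ in \Cref{eqn:att-missing-dist}. Here \Cref{ass:outcome-independence} replaces \Cref{ass:simple-cov-unc}: rather than learning this distribution, we make it irrelevant.

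\textbf{Step 1.} \Cref{ass:outcome-independence} states that, almost surely,
\begin{align*}
m_0(X_{t^*}(0),X_{t^*-1},Z) = \E[\Delta Y_{t^*}(0)\mid X_{t^*-1},Z,D=0] = m_0(X_{t^*-1},Z),
\end{align*}
where the second equality again uses that observed and untreated potential changes coincide when $D=0$. So the regression function does not vary with its first argument.

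\textbf{Step 2.} Substitute this into the integral in \Cref{eqn:att-missing-dist}. The integrand no longer depends on $x_{t^*}(0)$, and $\tilde{\F}_1(\cdot\mid x_{t^*-1},z)$ is a proper conditional distribution, so the inner integral equals $m_0(x_{t^*-1},z)$ regardless of the unknown $\tilde{\F}_1$. The outer integral against $\F_1(x_{t^*-1},z)$ then gives $\E[m_0(X_{t^*-1},Z)\mid D=1]$. Equivalently, apply the law of iterated expectations conditional on $(X_{t^*-1},Z,D=1)$. This yields the stated expression.

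\textbf{Main obstacle.} The only delicate point is support. \Cref{ass:outcome-independence} is an almost-sure statement under the untreated group's distribution, but we evaluate it under the treated group's distribution of $(X_{t^*}(0),X_{t^*-1},Z)$. \Cref{ass:overlap} says this distribution is absolutely continuous with respect to the untreated group's distribution. Hence the equality in Step 1 also holds almost surely with respect to the treated distribution, and the substitution in Step 2 is valid. I would state this explicitly; everything else is routine.
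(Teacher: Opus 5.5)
Your proof is correct and takes essentially the paper's route. You use iterated expectations and \Cref{ass:conditional-parallel-trends} to reach $m_0(X_{t^*}(0),X_{t^*-1},Z)$, then \Cref{ass:outcome-independence} to replace it by $m_0(X_{t^*-1},Z)$ so that $\tilde{\F}_1$ drops out, exactly as in the redundancy branch of the proof of \Cref{prop:mp-simple-cov-unc}; your argument that overlap gives absolute continuity of the treated covariate distribution with respect to the untreated one, so the almost-sure equality transfers, is a valid detail the paper leaves implicit.
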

The proof of \Cref{prop:att-pretreatment-alt-assumptions} is provided in \Cref{app:proofs}.

\FloatBarrier

\subsection{New Approach 2: Covariate Unconfoundedness} \label{sec:cov-unc}

In this section, we generalize the approach from the previous section.  The key restriction in the previous section was that the only confounders for $X_{t^*}$ were $X_{t^*-1}$ and $Z$, which exactly correspond to the pre-treatment covariates that show up in the parallel trends assumption.  But there is no strong \textit{ex ante} reason to suppose that the covariates should be the same for $X_{t^*}$ and $\Delta Y_{t^*}(0)$.  For example, since we take the first difference in $Y_{t^*}(0)$, the number of covariates required for parallel trends may be of lower dimension than for unconfoundedness for the bad control to hold.  Relative to the original DAG in \Cref{fig:causal_path_bad_controls}, in the previous section, we assumed that there were no additional confounders.  In this section, we instead suppose that the researcher observes the confounders $W$.  We show this graphically in \Cref{fig:dag_cov_unc}.

\begin{figure}[htb!]
    \centering
    \caption{DAG for $X_{t^*}$ under Covariate Unconfoundedness}
    \label{fig:dag_cov_unc}
    \scalebox{1.25}{
    \begin{tikzpicture}[
        every node/.style={font=\normalsize},
        dashedred/.style={draw=red, line width=0.4pt},
        arrowred/.style={->, red, dashed, line width=0.4pt},
        arrow/.style={->}
    ]
        \node (v0) at ( 1.71, 0.342) {$X_{t^*}$};
        \node (v1) at (-0.777, 0.297) {$D$};
        \node (v2) at ( 0.720,-1.58 ) {$Z$};
        \node[dashedred] (v3) at (-0.781, 1.67 ) {$W$};
        \node (v4) at (-2.98, -1.51 ) {$X_{t^*-1}$};

        \draw[arrow] (v2) edge (v1);
        \draw[arrow] (v4) edge (v1);
        \draw[arrow] (v3) edge (v1);

        \draw[arrow] (v2) edge (v0);
        \draw[arrow] (v4) edge (v0);
        \draw[arrow] (v3) edge (v0);

        \draw[arrow] (v1) edge (v0);
    \end{tikzpicture}
    }
    \begin{justify}
        {\small \textit{Notes:}
            The figure modifies Panel (b) of \Cref{fig:causal_path_bad_controls} so that \Cref{ass:cov-unc} holds.  The red solid box around $W$ indicates that $W$ is observed.
        }
    \end{justify}
\end{figure}
In line with the discussion above, we make the following additional assumptions:\footnote{We also assume that $W$ is observed, which expands \Cref{ass:sampling} slightly.  We do not explicitly state this expanded assumption for brevity.}
\begin{assumption}[Covariate Unconfoundedness] \label{ass:cov-unc}
    \begin{align*}
        X_{t^*}(0) \independent D \mid X_{t^*-1}, W, Z
    \end{align*}
\end{assumption}

\begin{assumption}[Overlap] \label{ass:overlap2}
$\P(D=1 \mid X_{t^*-1}, W, Z) < 1$
\end{assumption}

\Cref{ass:cov-unc} says that, conditional on the vector of pre-treatment covariates $(X_{t^*-1}, W, Z)$, the distribution of the untreated potential covariate $X_{t^*}(0)$ is the same for the treated and untreated group.  Like \Cref{ass:simple-cov-unc} (but unlike the traditional approaches discussed in \Cref{sec:traditional-approaches}), it allows for $X_{t^*}$ to both be affected by the treatment and play a genuine role as a covariate.
\Cref{ass:overlap2} modifies the overlap assumption from \Cref{ass:overlap} so that it holds conditional on $(X_{t^*-1}, W, Z)$.  Next, we show that the $\ATT$ is identified under these assumptions.
\begin{theorem} \label{thm:att-cov-unc} Under \Cref{ass:sampling,ass:conditional-parallel-trends,ass:cov-unc,ass:overlap2},
    \begin{align}
        \ATT &= \E[\Delta Y_{t^*} \mid D=1] - \E\Big[ \E\big[ m_0(X_{t^*}, X_{t^*-1}, Z) \bigm| X_{t^*-1}, W, Z, D=0 \big] \Bigm| D=1 \Big]
    \end{align}

\end{theorem}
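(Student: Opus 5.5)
The plan is to start from the expression for the $\ATT$ in \Cref{eqn:unidentified-ATT}, which follows from \Cref{ass:conditional-parallel-trends} alone, and then identify the only problematic term, $\E[m_0(X_{t^*}(0), X_{t^*-1}, Z) \mid D=1]$. Recall why the first term needs no further work: $\E[Y_{t^*}(1) - Y_{t^*-1}\mid D=1] = \E[\Delta Y_{t^*}\mid D=1]$, and by parallel trends plus the law of iterated expectations, $\E[\Delta Y_{t^*}(0)\mid D=1] = \E[m_0(X_{t^*}(0),X_{t^*-1},Z)\mid D=1]$. Here $m_0$ is identified on the support of the untreated group's covariates, and its use at the treated group's values of $(X_{t^*}(0),X_{t^*-1},Z)$ is justified by \Cref{ass:overlap2} (see below).

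The key step is to iterate expectations by conditioning additionally on $W$:
\begin{align*}
\E[m_0(X_{t^*}(0), X_{t^*-1}, Z) \mid D=1] = \E\Big[ \E\big[ m_0(X_{t^*}(0), X_{t^*-1}, Z) \bigm| X_{t^*-1}, W, Z, D=1 \big] \Bigm| D=1 \Big].
\end{align*}
Inside the inner expectation, $(X_{t^*-1},Z)$ are fixed, so the integrand is a fixed function of $X_{t^*}(0)$ alone. Then:
\begin{itemize}
\item \Cref{ass:cov-unc} says that the conditional distribution of $X_{t^*}(0)$ given $(X_{t^*-1},W,Z)$ is the same for $D=1$ and $D=0$. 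So the inner conditioning on $D=1$ can be replaced by conditioning on $D=0$.
\item For the untreated group, $X_{t^*}(0)=X_{t^*}$, so the inner expectation equals $\E[m_0(X_{t^*}, X_{t^*-1}, Z)\mid X_{t^*-1},W,Z,D=0]$, which is identified.
\end{itemize}
Substituting this back gives the stated expression.

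The main obstacle is rigor about supports and overlap. The conditional expectation given $D=0$ must be well defined at the values of $(X_{t^*-1},W,Z)$ reached by treated units, which is exactly what \Cref{ass:overlap2} delivers. We also need $m_0$ to be well defined at the values $X_{t^*}(0)$ takes for treated units. The inner $D=0$ expectation only evaluates $m_0$ where untreated units have mass, which handles this. Alternatively, one can argue that \Cref{ass:cov-unc} and \Cref{ass:overlap2} together imply overlap conditional on $(X_{t^*}(0),X_{t^*-1},Z)$.

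I would write the equality of conditional expectations in integral form, with $\d\tilde{\F}_1(x_{t^*}(0)\mid x_{t^*-1},w,z) = \d\F_0(x_{t^*}\mid x_{t^*-1},w,z)$, paralleling \Cref{eqn:att-missing-dist}, so that the interchange is transparent.
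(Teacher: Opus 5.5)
Your proposal is correct and follows the paper's own chain of equalities. Parallel trends plus iterated expectations give $\E[m_0(X_{t^*}(0),X_{t^*-1},Z)\mid D=1]$; a further iteration conditions on $(X_{t^*-1},W,Z)$; covariate unconfoundedness swaps $D=1$ for $D=0$ in the inner expectation; and $X_{t^*}(0)=X_{t^*}$ holds for untreated units. Your remark that covariate unconfoundedness and the overlap condition on $(X_{t^*-1},W,Z)$ together imply $\P(D=1\mid X_{t^*}(0),X_{t^*-1},Z)<1$ is valid, and it is the argument that justifies evaluating $m_0$ at treated units' values, a point the paper's proof leaves implicit.
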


The expression for the $\ATT$ in \Cref{thm:att-cov-unc} is more complicated than in previous results as it involves doubly nested conditional expectations, but the additional complications can be explained intuitively. The term $m_0(X_{t^*},X_{t^*-1},Z)$ is the average change in outcomes over time conditional on $X_{t^*}(0)$, $X_{t^*-1}$, and $Z$ among the untreated group.  Under \Cref{ass:conditional-parallel-trends}, this is the path of outcomes that, conditional on $X_{t^*}(0), X_{t^*-1}$, and $Z$, the treated group would have experienced if it had not participated in the treatment. The next expectation is over the distribution of $X_{t^*}(0)$ conditional on $X_{t^*-1}, W$, and $Z$ for the untreated group.  Under \Cref{ass:cov-unc}, this is the same conditional distribution that $X_{t^*}(0)$ follows for the treated group.  This part of the expression allows for $X_{t^*}$ to have been affected by the treatment, but we use observations from the untreated group with the same $(X_{t^*-1},W,Z)$ to learn how $X_{t^*}$ would have evolved for the treated group absent participating in the treatment.  Finally, the outside expectation is over the distribution of $X_{t^*-1}$, $W$, and $Z$ for the treated group and, therefore, allows for these variables to be distributed differently in the treated group than in the untreated group.

\Cref{thm:att-cov-unc} can also be explained in the context of our application on job displacement, where, for simplicity, suppose that $X_t$ denotes a worker's occupation in period $t$, $W$ denotes a worker's earnings in the pre-treatment period $t^*-1$, and $Z$ denotes a worker's age.  \Cref{ass:conditional-parallel-trends} says that, in the absence of job displacement, displaced and non-displaced workers would have experienced the same trend in earnings conditional on having the same age and the same occupation in each period.  Occupation is a bad control because it is likely to be affected by job displacement, but \Cref{ass:cov-unc} says that we can pin down what occupations displaced workers would have had in the absence of job displacement by looking at non-displaced workers who started in the same occupation and had the same age and earnings in period $t^*-1$.  \Cref{thm:att-cov-unc} effectively unpacks these assumptions to deliver the $\ATT$.  The inside expectation computes average earnings growth among non-displaced workers who match on occupation in each period and had the same age. The middle expectation then integrates over the occupations that displaced workers would have held absent displacement, recovered from non-displaced workers with the same pre-period occupation, earnings, and age. The outside expectation averages these counterfactual earnings trends over the pre-displacement characteristics of displaced workers, delivering the $\ATT$.

\begin{remark}[Lagged Outcomes as $W$]\label{rem:lagged-outcome}
An interesting special case arises when $W = Y_{t^*-1}$, so that covariate unconfoundedness holds after conditioning on the pre-treatment outcome in addition to $(X_{t^*-1}, Z)$.  In this case, \Cref{thm:att-cov-unc} yields
\begin{align*}
    \ATT &= \E[\Delta Y_{t^*} \mid D=1] - \E\Big[ \E\big[ m_0(X_{t^*}, X_{t^*-1}, Z) \big| X_{t^*-1}, Y_{t^*-1}, Z, D=0 \big] \Big| D=1 \Big].
\end{align*}
Several papers have considered the role of lagged outcomes in difference-in-differences designs (e.g., \textcite{chabe-2017,daw-hatfield-2018,imai-kim-wang-2023}).  It is common in empirical work to try to include the lagged outcome as a conditioning variable, though it can be awkward, as in many cases it either reduces the parallel trends assumption to a version of unconfoundedness conditional on the lagged outcome or leads to endogeneity similar to a dynamic panel data model (see, e.g., \textcite{marx-tamer-tang-2025}).  The expression above provides a case that is genuinely difference-in-differences, while the lagged outcome also plays an important role in dealing with bad control.
\end{remark}

\begin{remark}[Parallel Trends for Bad Control]
    For both of the approaches discussed in this section, the key additional assumptions were the unconfoundedness assumptions in \Cref{ass:simple-cov-unc} and \Cref{ass:cov-unc}, respectively.  A natural alternative would be to invoke parallel trends assumptions directly for the bad control. Importantly, however, the arguments developed above require identification of the entire conditional distribution of \(X_{t^*}(0)\) for the treated group, rather than only its conditional mean.  Methods from the difference-in-differences literature that recover the distribution of untreated potential outcomes, such as those proposed by \textcite{bonhomme-sauder-2011,callaway-li-2019,callaway-li-oka-2018}, could in principle be applied in this context, though these approaches rely on additional identifying assumptions. Similarly, change-in-changes methods \parencite{athey-imbens-2006,melly-santangelo-2015}, which are designed to recover distributions of untreated potential outcomes, could be adapted to model the evolution of time-varying covariates. A potential limitation of these distributional approaches in the present setting is that they typically point identify distributions only for continuously distributed outcomes. As a result, they may be less suitable for applications involving discrete or mixed discrete-continuous time-varying covariates, which arise frequently in empirical work.  Alternatively, under an additional linearity assumption, it is possible to recover the $\ATT$ while only recovering $\E[\Delta X_{t^*}(0) \mid D=1]$.  We explore this possibility in Supplementary Appendix \ref{app:bad-control-parallel-trends}.
\end{remark}

\section{Staggered Treatment Adoption} \label{sec:staggered}

In this section, we extend our identification results from the two-period case to settings with more time periods and variation in treatment timing across units.  We focus on the identification strategy from \Cref{sec:cov-unc}, as extending the results from \Cref{sec:good-pre-only} follows effectively immediately from using the arguments in \textcite{callaway-santanna-2021}.

\begin{namedassumption}{MP-1}[Staggered Treatment Adoption] \label{ass:staggered} For $t=2, \ldots, \T$, $D_{it-1}=1 \implies D_{it} = 1$.
\end{namedassumption}

\Cref{ass:staggered} says that, once a unit becomes treated, it remains treated in all subsequent periods.  This setting is common in empirical work and is also the most common setting that has been considered in the literature on difference-in-differences with treatment effect heterogeneity \parencite{chaisemartin-dhaultfoeuille-2020,callaway-santanna-2021,goodman-bacon-2021,sun-abraham-2021,caetano-callaway-2025}.  Let $G_i$ denote the period when unit $i$ first becomes treated, with $G_i = \infty$ for never-treated units.  We assume no group is treated in the first period (if an ``already treated'' group exists, we drop it, since parallel trends does not generally identify treatment effect parameters for this group nor is it useful as a comparison group without additional assumptions).  Let $\mathcal{G} \subseteq \{2, \ldots, \T\} \cup \{\infty\}$ denote the set of all groups and $\bar{\mathcal{G}} := \mathcal{G} \setminus \{\infty\}$ the set of groups that ever participate in the treatment. Let $Y_{it}(0)$ denote untreated potential outcomes, $Y_{it}(g)$ denote the potential outcomes in period $t$ of becoming treated in period $g$, and $X_{it}(0)$ and $X_{it}(g)$ denote the corresponding potential versions of the bad control.  Let $\mathbf{Y}_i = (Y_{i1}, \ldots,  Y_{i\T})'$ and $\mathbf{X}_i = (X_{i1}, \ldots, X_{i\T})'$
denote the $\T \times 1$ vector of outcomes across all periods and the $\T \times 1$ vector of the bad control across all time periods.  We use analogous notation for $\mathbf{D}_i$, $\mathbf{Y}_i(0)$, $\mathbf{Y}_i(g)$, $\mathbf{X}_i(0)$, and $\mathbf{X}_i(g)$, which are all $\T \times 1$ vectors.  For two periods $s_1 < s_2$, we use notation like $\mathbf{X}_{i,s_1:s_2} := (X_{s_1}, \ldots, X_{s_2})'$ to denote subvectors of $\mathbf{X}_i$ that includes only periods from $s_1$ to $s_2$.  We continue to use $Z_i$ and $W_i$ to denote additional exogenous covariates, where $W_i$ can include additional covariates that show up in the multi-period analog of \Cref{ass:cov-unc}.  Next, we introduce a no anticipation assumption.

\begin{namedassumption}{MP-2}[No anticipation]\label{ass:no-anticipation} For all units and time periods such that $t < G_i$ (i.e., pre-treatment periods for unit $i$), $Y_{it} = Y_{it}(0)$ and $X_{it} = X_{it}(0)$.
\end{namedassumption}

\Cref{ass:no-anticipation} requires that the treatment does not affect the outcome or bad control in periods before the treatment takes place. This assumption extends the standard no anticipation assumption in the DiD literature \parencite{callaway-santanna-2021,sun-abraham-2021} to additionally hold for the bad control. The assumption can be relaxed to ``limited anticipation'' at the cost of additional notation \parencite{callaway-santanna-2021}.  Under \Cref{ass:staggered,ass:no-anticipation}, observed outcomes can be expressed as $Y_{it} = \indicator{t \geq G_i} Y_{it}(G_i) + \indicator{t < G_i} Y_{it}(0)$.  In other words, in pre-treatment periods we observe untreated potential outcomes, and in post-treatment periods we observe the potential outcome corresponding to unit $i$'s actual treatment date.  Similarly, $X_{it} = \indicator{t \geq G_i} X_{it}(G_i) + \indicator{t < G_i} X_{it}(0)$.

In line with recent work that has considered difference-in-differences approaches under staggered treatment adoption, we focus on identifying group-time average treatment effects \parencite{callaway-santanna-2021,wooldridge-2025} that are defined as
\begin{align*}
    \ATT(g,t) = \E[Y_t(g) - Y_t(0) \mid G=g]
\end{align*}
$\ATT(g,t)$ is the average treatment effect for group $g$ in in period $t$.  In some applications, $\ATT(g,t)$'s can be the main target parameters, but \textcite{callaway-santanna-2021} also discuss how $\ATT(g,t)$'s can be aggregated into an overall average treatment effect, an event study, or other target parameters.  The reason to focus on identification of $ATT(g,t)$'s is that, if they can be identified, one can follow known recipes to recover all of the common target parameters in the literature.

\begin{namedassumption}{MP-3}[Multi-Period Random Sampling] \label{ass:multi-sampling} The observed data consists of $\{Y_{i1},\ldots,Y_{i\T}, X_{i1}, \ldots, X_{i\T}, W_i, Z_i, D_i\}_{i=1}^n$ which are independent and identically distributed across units.
\end{namedassumption}

\begin{namedassumption}{MP-4}[Multi-Period Conditional Parallel Trends]  \label{ass:multi-conditional-parallel-trends} For all $g \in \mathcal{G}$, and for all  $t=2,\ldots,\T$,
    \begin{align*}
        \E[\Delta Y_{t}(0) \mid \mathbf{X}(0), Z, G=g] = \E[\Delta Y_{t}(0) \mid \mathbf{X}(0), Z]
    \end{align*}
\end{namedassumption}

\begin{namedassumption}{MP-5}[Multi-Period Covariate Unconfoundedness] \label{ass:multi-cov-unc}
For all $t \geq 2$,
    \begin{align*}
        X_t(0) \independent \big(G, X_{t-2}(0), \ldots, X_1(0)\big) \bigm| \big(X_{t-1}(0), W, Z\big)
    \end{align*}
\end{namedassumption}

\begin{namedassumption}{MP-6}[Multi-Period Overlap] \label{ass:multi-overlap} For all $g \in \bar{\mathcal{G}}$, there exists some $\epsilon > 0$ such that $\P(G=g) > \epsilon$ and $\P(G=\infty \mid \mathbf{X}(0), W, Z) > \epsilon$.  %
\end{namedassumption}

\Cref{ass:multi-sampling} says that we observe $\T$ periods of panel data.  \Cref{ass:multi-conditional-parallel-trends} extends \Cref{ass:conditional-parallel-trends} to the multi-period setting.  It says that conditional parallel trends hold across all groups and time periods. \Cref{ass:multi-cov-unc} extends the two-period Covariate Unconfoundedness assumption (\Cref{ass:cov-unc}) to the staggered setting.  Conditional on the untreated potential bad control in the preceding period, $X_{t-1}(0)$, as well as the other covariates $(W,Z)$, the untreated potential bad control in the current period is independent of the group.  The assumption also embeds a first-order Markov structure, so that $X_t(0)$ depends on the entire covariate history only through $X_{t-1}(0)$.  \Cref{ass:multi-overlap} is a multiple-period version of the overlap assumption, which implies that there are available comparison groups for all values of the covariates.

\begin{theorem} \label{thm:mp} Under \Cref{ass:staggered,ass:no-anticipation,ass:multi-sampling,ass:multi-conditional-parallel-trends,ass:multi-overlap,ass:multi-cov-unc}, and for $t \geq g$ (post-treatment periods for group $g$),
    \small
    \begin{align*}
        \ATT(g,t) &= \E[Y_t - Y_{g-1} | G=g] \\
        &- \E\left[\E\big[\E[Y_t - Y_{g-1} \mid \mathbf{X}_{g:\T},\mathbf{X}_{1:(g-1)},Z,G=\infty]\bigm| \mathbf{X}_{1:(g-1)},W,Z,G=\infty \big] \Bigm| G=g\right]
    \end{align*}
\end{theorem}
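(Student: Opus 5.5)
The proof mirrors the argument behind \Cref{thm:att-cov-unc}, with the first-order Markov structure in \Cref{ass:multi-cov-unc} used to chain the post-treatment path of the bad control forward.

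\textbf{Step 1: reduce to an unidentified counterfactual trend.} Write $\ATT(g,t) = \E[Y_t - Y_{g-1}\mid G=g] - \E[Y_t(0)-Y_{g-1}(0)\mid G=g]$, using \Cref{ass:no-anticipation} to replace $Y_{g-1}$ by $Y_{g-1}(0)$. Telescope $Y_t(0)-Y_{g-1}(0)=\sum_{s=g}^{t}\Delta Y_s(0)$. Apply \Cref{ass:multi-conditional-parallel-trends} to each $\Delta Y_s(0)$ conditional on $(\mathbf{X}(0),Z)$, once for $G=g$ and once for $G=\infty$. Summing back gives
\begin{align*}
\E[Y_t(0)-Y_{g-1}(0)\mid \mathbf{X}(0),Z,G=g]=\E[Y_t(0)-Y_{g-1}(0)\mid \mathbf{X}(0),Z,G=\infty]=: \mu(\mathbf{X}(0),Z).
\end{align*}
For the never-treated group, $\mathbf{X}(0)=\mathbf{X}$ and $\mathbf{Y}(0)=\mathbf{Y}$, so $\mu$ is the identified innermost expectation in the statement. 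Overlap (\Cref{ass:multi-overlap}) ensures it is well defined on the support relevant for group $g$. Hence $\E[Y_t(0)-Y_{g-1}(0)\mid G=g]=\E[\mu(\mathbf{X}(0),Z)\mid G=g]$, exactly as in \eqref{eqn:unidentified-ATT}. The only unidentified object is the law of $\mathbf{X}_{g:\T}(0)$ given $(\mathbf{X}_{1:(g-1)},W,Z,G=g)$; the pre-period block is observed by \Cref{ass:no-anticipation}.

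\textbf{Step 2 (main step): identify the conditional law of the post-period bad control.} I will show
\begin{align*}
\mathbf{X}_{g:\T}(0)\mid(\mathbf{X}_{1:(g-1)}(0),W,Z,G=g)\ \sim\ \mathbf{X}_{g:\T}(0)\mid(\mathbf{X}_{1:(g-1)}(0),W,Z,G=\infty).
\end{align*}
Factor each joint conditional density sequentially as $\prod_{s=g}^{\T} f(x_s(0)\mid \mathbf{x}_{1:(s-1)}(0),w,z,G=\cdot)$. By \Cref{ass:multi-cov-unc}, $X_s(0)$ is independent of $(G,X_{s-2}(0),\ldots,X_1(0))$ given $(X_{s-1}(0),W,Z)$. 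So each factor equals $f(x_s(0)\mid x_{s-1}(0),w,z)$, which does not depend on the group. The same factorization therefore holds for $G=g$ and $G=\infty$.

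This step is the crux. Care is needed because the joint independence of $G$ and the lags, not merely $X_s(0)\independent G$, is what licenses dropping both $G$ and the earlier history in every factor. Overlap guarantees that conditioning on $G=\infty$ is well defined on the support of the group-$g$ covariates. For discrete or mixed $X$ I would phrase this with conditional distributions rather than densities, via the same chain-rule argument.

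\textbf{Step 3: assemble.} Use the tower property on $\E[\mu(\mathbf{X}(0),Z)\mid G=g]$, conditioning first on $(\mathbf{X}_{1:(g-1)},W,Z,G=g)$. Step 2 replaces the inner conditional expectation with the one for $G=\infty$. For that group, $\mathbf{X}(0)=\mathbf{X}$, so the inner term becomes $\E[\mu(\mathbf{X},Z)\mid\mathbf{X}_{1:(g-1)},W,Z,G=\infty]$. Substituting the definition of $\mu$ gives the triple-nested expression in the statement, and subtracting it from $\E[Y_t-Y_{g-1}\mid G=g]$ completes the proof.
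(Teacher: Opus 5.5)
Your proposal is correct and follows essentially the same route as the paper. Step 1 is the paper's Lemma~\ref{lem:mp1} (telescoping plus multi-period parallel trends, with the never-treated group as comparison). Step 2 is Lemma~\ref{lem:mp-cov-unc2}, obtained by the same sequential factorization under the first-order Markov covariate unconfoundedness assumption, and Step 3 is the same tower-property assembly.
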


\Cref{thm:mp} extends the covariate unconfoundedness result from \Cref{thm:att-cov-unc} to the staggered case, producing the same nested expectation structure.  One difference is that period $(g-1)$ is the ``base period'' rather than $t-1$.  This arises because, for group $g$, period $(g-1)$ is the period immediately before treatment begins, and parallel trends allows us to recover the path of untreated potential outcomes from that period to the current period.  \Cref{thm:mp} also uses the never-treated group as the comparison group as it is the only group for which the untreated potential bad control is observed in all periods.

\subsection{Extensions}

In this section, we provide two additional identification results for the staggered adoption setting.  One challenge for operationalizing \Cref{thm:mp} is that its identification result involves conditioning on the full vector of bad controls across periods, which will often be relatively high-dimensional.  The first extension in this section discusses (typically mild in applications) assumptions that decrease the number of periods in which the bad control needs to be conditioned on.  The second extension provides an identification result leading to conditioning only on the pre-treatment value of the bad control based on either a version of simple covariate unconfoundedness (similar to \Cref{thm:att-pretreatment-covs} above) or a redundancy condition (similar to \Cref{prop:att-pretreatment-alt-assumptions} above) but for the staggered adoption setting considered in this section.

\subsubsection{Extension 1: Dimension Reduction}

\begin{namedassumption}{MP-7}[Multi-Period Bad Control Dimension Reduction] \label{ass:mp-dim-reduction}
    For all $1 \leq t_1 < t_2 \leq \T$, and for all $g \in \mathcal{G}$,
    \begin{align*}
        \E[Y_{t_2}(0) - Y_{t_1}(0) \mid \mathbf{X}(0), Z, G=g] = \E[Y_{t_2}(0) - Y_{t_1}(0) \mid X_{t_2}(0), X_{t_1}(0), Z, G=g]
    \end{align*}
\end{namedassumption}

\Cref{ass:mp-dim-reduction} says that the path of untreated potential outcomes between any two periods depends on the time-varying covariates only through their values in those two periods, not the full history.  This type of assumption is very common in panel data models generally (see, e.g., Equation 10.12 in \textcite{wooldridge-2010}).

\begin{proposition} \label{prop:mp-lower-dimension} Under \Cref{ass:staggered,ass:no-anticipation,ass:multi-sampling,ass:multi-conditional-parallel-trends,ass:multi-cov-unc,ass:multi-overlap,ass:mp-dim-reduction}, and for $t \geq g$ (post-treatment periods for group $g$),
    \small
    \begin{align*}
        \ATT(g,t) &= \E[Y_t - Y_{g-1} | G=g] \\
        &- \E\left[\E\big[\E[Y_t - Y_{g-1} \mid X_t,X_{g-1},Z,D_t=0]\bigm| X_{g-1},W,Z,D_t=0 \big] \Bigm| G=g\right]
    \end{align*}
\end{proposition}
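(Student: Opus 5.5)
The proof follows the two-step logic of \Cref{thm:att-cov-unc}, with three changes. \Cref{ass:mp-dim-reduction} collapses the conditioning on $\mathbf{X}(0)$ to $(X_t(0),X_{g-1}(0),Z)$. The Markov structure in \Cref{ass:multi-cov-unc} is chained from period $g-1$ to period $t$. The comparison group is the not-yet-treated group $\{D_t=0\}=\{G>t\}$ rather than only the never-treated group.

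\textbf{Step 1 (decomposition).} Under \Cref{ass:staggered,ass:no-anticipation}, $Y_{g-1}=Y_{g-1}(0)$ and $Y_t=Y_t(g)$ for $G=g$, $t\ge g$. Hence
\[
\ATT(g,t)=\E[Y_t-Y_{g-1}\mid G=g]-\E[Y_t(0)-Y_{g-1}(0)\mid G=g],
\]
and only the second term needs to be identified.

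\textbf{Step 2 (a common outcome function).} Write $Y_t(0)-Y_{g-1}(0)=\sum_{s=g}^{t}\Delta Y_s(0)$. By \Cref{ass:multi-conditional-parallel-trends}, $\E[Y_t(0)-Y_{g-1}(0)\mid \mathbf{X}(0),Z,G=g']=h(\mathbf{X}(0),Z)$, a function that does not depend on $g'$. By \Cref{ass:mp-dim-reduction}, for each group $g'$ this also equals $h_{g'}(X_t(0),X_{g-1}(0),Z)$. Comparing the two representations across groups gives $h_{g}=h_{g'}=:m$ on the common support of $(X_t(0),X_{g-1}(0),Z)$; the overlap condition in \Cref{ass:multi-overlap} is what guarantees that this support comparison is valid. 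Integrating over the groups with $G>t$ then shows that $m(x_t,x_{g-1},z)=\E[Y_t(0)-Y_{g-1}(0)\mid X_t(0)=x_t,X_{g-1}(0)=x_{g-1},Z=z,D_t=0]$. For units with $D_t=0$, no anticipation gives $Y_t(0)=Y_t$, $Y_{g-1}(0)=Y_{g-1}$, $X_t(0)=X_t$ and $X_{g-1}(0)=X_{g-1}$, so $m$ coincides with the innermost conditional expectation in the statement. Iterated expectations for the treated group then give
\[
\E[Y_t(0)-Y_{g-1}(0)\mid G=g]=\E\big[\E[m(X_t(0),X_{g-1},Z)\mid X_{g-1},W,Z,G=g]\bigm| G=g\big],
\]
using $X_{g-1}=X_{g-1}(0)$ for group $g$.

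\textbf{Step 3 (transporting the law of $X_t(0)$).} It remains to show that the conditional law of $X_t(0)$ given $(X_{g-1}(0),W,Z)$ is the same for $G=g$ as for the pooled group $D_t=0$. By \Cref{ass:multi-cov-unc}, for every $s\ge g$ the kernel of $X_s(0)$ given $(X_{s-1}(0),W,Z,G,X_{1:s-2}(0))$ depends only on $(X_{s-1}(0),W,Z)$, and this kernel is common across groups. Composing these kernels for $s=g,\dots,t$ (a Chapman--Kolmogorov argument) shows that the law of $X_t(0)$ given $(X_{g-1}(0),W,Z,G=g')$ is the same composed kernel for every $g'$. Because every component group of $\{G>t\}$ has this same conditional law, so does the mixture. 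For $D_t=0$ units $X_t(0)=X_t$, so the middle expectation can be computed from the $D_t=0$ group. \Cref{ass:multi-overlap} ensures these conditional laws are well defined on the relevant support, since never-treated units are present at every covariate value.

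I expect Step 3, together with the support bookkeeping in Step 2, to be the main obstacle. The delicate points are that the conditioning set drops the history $X_{1:g-2}$, that $\{D_t=0\}$ is a mixture of groups, and that the cross-group equality of $h_{g'}$ must be justified on the appropriate supports. The Markov composition and the overlap assumption should handle each of these.
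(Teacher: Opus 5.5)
Your proposal is correct and follows essentially the same route as the paper. Step 2 reproduces the paper's use of the \Cref{lem:mp1} argument together with \Cref{ass:mp-dim-reduction} to obtain an inner regression on $(X_t(0),X_{g-1},Z)$ in the $D_t=0$ group, and Step 3 is the paper's \Cref{lem:mp-cov-unc1} (chaining the kernels from \Cref{ass:multi-cov-unc}) followed by iterated expectations over $(X_{g-1},W,Z)$. Your support bookkeeping is more explicit than the paper's ``one can show,'' but one phrase needs tightening: comparing $h_g$ and $h_{g'}$ on the common support of $(X_t(0),X_{g-1}(0),Z)$ is not enough. Instead, pivot each group's regression function through the never-treated one, $h_{g'}=h_\infty$ almost surely under the $G=g'$ law. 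This works because \Cref{ass:multi-overlap} gives absolute continuity of each group's law of the full vector $(\mathbf{X}(0),W,Z)$ with respect to the never-treated group's law.
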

The key difference in \Cref{prop:mp-lower-dimension} relative to \Cref{thm:mp} is that it only involves conditioning on $X_{g-1}$ rather than the entire covariate history, with the downstream implication being that it notably simplifies estimation.  It also uses the not-yet-treated group as the comparison group, which is a larger comparison group than the never-treated group in \Cref{thm:mp}, though we note that effectively the same arguments can rationalize using the never-treated comparison group as well.  \Cref{prop:mp-lower-dimension} is a mild, but practically useful, extension of our earlier identification result, and it is the one that is implemented in our application and accompanying R package.

\subsubsection{Extension 2: Simple Covariate Unconfoundedness}

Second, we provide a staggered adoption version of the result in \Cref{thm:att-pretreatment-covs} and \Cref{prop:att-pretreatment-alt-assumptions}.

\begin{namedassumption}{MP-8}[Multi-Period Simple Covariate Unconfoundedness] \label{ass:multi-simple-cov-unc}
For all $t \geq 2$,
    \begin{align*}
        X_t(0) \independent \big(G, X_{t-2}(0), \ldots, X_1(0)\big) \bigm| \big(X_{t-1}(0), Z\big)
    \end{align*}
\end{namedassumption}

\begin{namedassumption}{MP-9}[Multi-Period Bad Control Redundancy] \label{ass:mp-bad-control-redundancy}
    For all $1 \leq t_1 < t_2 \leq \T$, and for all $g \in \mathcal{G}$,
    \begin{align*}
        \E[Y_{t_2}(0) - Y_{t_1}(0) \mid \mathbf{X}(0), Z, G=g] = \E[Y_{t_2}(0) - Y_{t_1}(0) \mid X_{t_1}(0), Z, G=g]
    \end{align*}
\end{namedassumption}

\Cref{ass:multi-simple-cov-unc} is the staggered treatment adoption version of \Cref{ass:simple-cov-unc} from the case with two periods, which provides an unconfoundedness assumption conditional on $\big(X_{g-1}(0),Z\big)$.  The difference from \Cref{ass:multi-cov-unc} is that it does not additionally condition on $W$.  \Cref{ass:mp-bad-control-redundancy} is the staggered adoption version of \Cref{ass:outcome-independence} from the two period setting.  Like that assumption, it implies that it is sufficient to control for the pre-treatment value of the bad control in the parallel trends assumption.

\begin{proposition} \label{prop:mp-simple-cov-unc} Under Assumptions \ref{ass:staggered}, \ref{ass:no-anticipation}, \ref{ass:multi-sampling}, \ref{ass:multi-conditional-parallel-trends}, \ref{ass:multi-overlap}, and \ref{ass:mp-dim-reduction}, for $t \geq g$ (post-treatment periods for group $g$), and if either Assumption \ref{ass:multi-simple-cov-unc} or \ref{ass:mp-bad-control-redundancy} holds, then
    \small
    \begin{align*}
        \ATT(g,t) &= \E[Y_t - Y_{g-1} | G=g] - \E\left[\E\big[Y_t - Y_{g-1} \bigm|  X_{g-1},Z,D_t=0 \big] \Bigm| G=g\right]
    \end{align*}
\end{proposition}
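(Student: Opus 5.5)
Fix a group $g\in\bar{\mathcal{G}}$ and a period $t\ge g$. The plan is to reduce both cases of the proposition to a single intermediate identity and then finish with the standard conditional DiD argument. By \Cref{ass:staggered,ass:no-anticipation}, $Y_t=Y_t(g)$ for $G=g$, and $Y_{g-1}=Y_{g-1}(0)$. Hence
\[
\ATT(g,t)=\E[Y_t-Y_{g-1}\mid G=g]-\E[Y_t(0)-Y_{g-1}(0)\mid G=g].
\]
The intermediate identity I aim for is
\[
\E[Y_t(0)-Y_{g-1}(0)\mid X_{g-1}(0),Z,G=g]=\E[Y_t(0)-Y_{g-1}(0)\mid X_{g-1}(0),Z,D_t=0]. \qquad (\ast)
\]
Given $(\ast)$, observe that $X_{g-1}(0)=X_{g-1}$ for $G=g$, and that on $\{D_t=0\}=\{G>t\}$ both $Y_t,Y_{g-1}$ are untreated and $X_{g-1}$ is pre-treatment. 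Iterated expectations over $(X_{g-1},Z)\mid G=g$ then gives the stated formula. \Cref{ass:multi-overlap} guarantees that the conditional expectations given $D_t=0$ are well defined on the support of the treated group.

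\textbf{Step 1 (telescoping plus common structure).} Write $Y_t(0)-Y_{g-1}(0)=\sum_{s=g}^{t}\Delta Y_s(0)$. By \Cref{ass:multi-conditional-parallel-trends}, for every $g'$,
\[
\E[\Delta Y_s(0)\mid \mathbf{X}(0),Z,G=g']=\mu_s(\mathbf{X}(0),Z),
\]
a common function not depending on $g'$. Summing, $\E[Y_t(0)-Y_{g-1}(0)\mid \mathbf{X}(0),Z,G=g']=\mu(\mathbf{X}(0),Z)$ for all $g'$. Under \Cref{ass:mp-dim-reduction}, $\mu$ depends only on $(X_t(0),X_{g-1}(0),Z)$. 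Under \Cref{ass:mp-bad-control-redundancy}, $\mu$ depends only on $(X_{g-1}(0),Z)$.

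\textbf{Case (i): redundancy.} Here $(\ast)$ is immediate. Conditioning down from $(\mathbf{X}(0),Z)$ to $(X_{g-1}(0),Z)$ gives $\mu(X_{g-1}(0),Z)$ both for $G=g$ and for each not-yet-treated group $g'>t$. Averaging over $g'$ within $D_t=0$ preserves this common value.

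\textbf{Case (ii): simple covariate unconfoundedness.} I must show that the conditional law of $X_t(0)$ given $(X_{g-1}(0),Z)$ is the same for $G=g$ and for $G=g'$, $g'>t$. Then
\[
\E[\mu(X_t(0),X_{g-1}(0),Z)\mid X_{g-1}(0),Z,G]
\]
is the same for every such group, which yields $(\ast)$. To get this, I would use \Cref{ass:multi-simple-cov-unc}, a Markov-chain-with-group-independent-kernel condition. By induction on $s=g,\dots,t$, the joint law of $(X_{g-1}(0),\dots,X_s(0))$ given $(X_{g-1}(0),Z,G)$ does not depend on $G$. At each step, the law of $X_s(0)$ given $(X_{s-1}(0),\dots,X_1(0),Z,G)$ equals its law given $(X_{s-1}(0),Z)$. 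The one point needing care is that conditioning on $X_{g-1}(0)$ alone is what we need, and the earlier history $X_{1:(g-2)}(0)$ may differ across groups. Because the independence in \Cref{ass:multi-simple-cov-unc} is joint in $(G,X_{s-2}(0),\dots)$, the kernel from $X_{s-1}(0)$ forward is free of both $G$ and the past. So marginalizing out the earlier history causes no trouble.

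\textbf{Main obstacle.} I expect the hardest step to be this joint-independence induction in Case (ii). It requires tracking that the conditional independence with respect to the pair (group, past history) lets me integrate out $X_{1:(g-2)}(0)$ and still compare groups. It also requires checking that, for members of $D_t=0$, the variables $X_{g-1},\dots,X_t$ are indeed untreated potential values, which holds by no anticipation since $G>t$.
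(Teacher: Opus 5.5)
Your proposal is correct and follows essentially the same route as the paper. The paper likewise telescopes under parallel trends to get a group-invariant conditional trend (Lemma~\ref{lem:mp1}) and applies the redundancy condition directly in that case. In the other case it uses dimension reduction, the Markov-chain argument giving $\mathbf{X}_{g:\T}(0) \independent G \mid X_{g-1}(0), Z$ (Lemma~\ref{lem:mp-simple-cov-unc}), and iterated expectations, exactly as you do. The differences are organizational: you establish $(\ast)$ group by group before averaging over not-yet-treated groups, while the paper works directly with the $D_t=0$ comparison. Your remark that the joint independence lets you integrate out $X_{1:(g-2)}(0)$ also makes explicit a step the paper's omitted lemma proof leaves implicit.
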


\Cref{prop:mp-simple-cov-unc} is the staggered adoption extension of \Cref{thm:att-pretreatment-covs} and \Cref{prop:att-pretreatment-alt-assumptions} from the two period case.  It says that $\ATT(g,t)$ is identified by a notably simplified estimand if either the version of unconfoundedness in \Cref{ass:multi-simple-cov-unc} holds or the redundancy condition in \Cref{ass:mp-bad-control-redundancy}.  Like the previous extension, this is basically a straightforward result given the previous ones, but it is practically useful.  Under the conditions discussed here, one can directly use \textcite{callaway-santanna-2021} for estimation as long as the pre-treatment value of the bad control is included as a covariate.

\subsection{Additional Remarks}

\begin{remark}[Aggregating $\ATT(g,t)$'s] \label{rem:aggregated-parameters}
    As discussed above, all common target parameters under staggered treatment adoption can be recovered as weighted averages of group-time average treatment effects.  For example, letting $e$ denote the length of exposure to the treatment, an event study parameter is given by
    \begin{align*}
        \ATT^{es}(e) = \sum_{g \in \mathcal{G}_e} ATT(g,g+e) \P(G=g \mid G \in \mathcal{G}_e)
    \end{align*}
    where $\mathcal{G}_e = \{ g \in \bar{\mathcal{G}} : g + e \in [2,\T]\}$, i.e., the set of groups that are observed to participate in the treatment for $e$ periods.  Similarly, an overall $\ATT$ is given by
    \begin{align*}
        \ATT^{o} = \sum_{g \in \bar{\mathcal{G}}} \sum_{t=g}^{\T} \frac{\P(G=g \mid G \in \bar{\mathcal{G}})}{\T-g+1} ATT(g,t).
    \end{align*}
    See \textcite{callaway-santanna-2021} for more details and more examples.
\end{remark}

\begin{remark}[Lagged Outcomes as $W$ in the Multi-Period Case]\label{rem:lagged-outcome-multi}
As in the two-period case (\Cref{rem:lagged-outcome}), a natural choice is $W = Y_{g-1}$, so that covariate unconfoundedness holds after conditioning on the lagged outcome.  This argument continues to go through in the staggered adoption setting considered in this section. However, an interesting (and arguably more natural) assumption in the staggered adoption setting is:
\begin{align*}
    X_t(0) \independent \big(G, X_{t-2}(0), \ldots, X_1(0)\big) \bigm| \big(X_{t-1}(0), Y_{t-1}(0), Z\big).
\end{align*}
This assumption introduces additional challenges due to $Y_{t-1}(0)$ not always being observed for group $g$, in contrast to $Y_{g-1}(0)$ always being observed.  In particular, operationalizing this assumption would require dealing with feedback from the treatment to the bad control that arises through the outcome, which is not straightforward to deal with in a difference-in-differences context.  See \textcite{bonhomme-2025,marx-tamer-tang-2025} for recent discussions about dynamics and feedback in the context of panel data causal inference.
\end{remark}

\begin{remark}[Pre-Testing]\label{rem:pseudo-att}
    None of \Cref{ass:multi-conditional-parallel-trends,ass:multi-cov-unc,ass:multi-simple-cov-unc} is directly testable, but the same sort of pre-tests that are commonly used in difference-in-differences applications are useful to assess their plausibility.  In particular, one can compute the same estimand as in Proposition \ref{prop:mp-lower-dimension} or \ref{prop:mp-simple-cov-unc} but for pre-treatment periods, in which case these pre-treatment pseudo-$\ATT(g,t)$'s should be equal to zero.  To pre-test assumptions about the bad control specifically, under our assumptions, $\ATT_X(g,t) := \E[X_t(g) - X_t(0) \mid G=g]$ is identified.  Estimating $\ATT_X(g,t)$ in pre-treatment periods provides a way to assess Assumption \ref{ass:multi-cov-unc} or \ref{ass:multi-simple-cov-unc} in the periods before the treatment occurred.  In post-treatment periods, it a test for $X_t$ actually being a bad control as it should be affected by the treatment.
\end{remark}

\section{Estimation and Inference}\label{sec:estimation-inference}

This section describes how to estimate the $\ATT$ based on the identification results in previous sections.  We focus on the case with two time periods, as the extension to estimating $\ATT(g,t)$ is straightforward.  We also focus on the setting of \Cref{thm:att-cov-unc}, where \Cref{ass:cov-unc} holds, rather than \Cref{thm:att-pretreatment-covs} because existing estimation results from \textcite{callaway-santanna-2021} hold essentially immediately in the latter case.  We propose two approaches.  First, we propose an imputation approach based on imposing linear models.  Second, we propose an approach based on a doubly robust estimand for the $\ATT$ that can be combined with estimating the nuisance functions with machine learners and, hence, permits estimation of the $\ATT$ without imposing strong functional form assumptions.

\subsection{Imputation Estimator}\label{sec:estimation-inference:ra}

We begin with an imputation estimator that serves as a natural baseline estimator that imposes linear models for unknown nuisance functions, building on the literature on imputation estimators in the context of difference-in-differences \parencite{gardner-thakral-to-yap-2023,borusyak-jaravel-spiess-2024,liu-wang-xu-2024}. This approach is easy to implement and allows for treatment effect heterogeneity.  %
From Theorem~\ref{thm:att-cov-unc}, the $\ATT$ can be written as
\begin{align*}
\ATT=\E[\Delta Y_{t^*} \mid D=1]-\tau,
\qquad
\tau :=\E\left[\nu_0(X_{t^*-1},W,Z)\mid D=1\right],
\end{align*}
where $\nu_0$ is defined as
\begin{align} \label{eqn:nu0}
    \nu_0(X_{t^*-1},W,Z) := \E[m_0(X_{t^*},X_{t^*-1},Z) \mid X_{t^*-1}, W, Z, D=0],
\end{align}
which is a nested conditional expectation in the untreated group. Our imputation approach operationalizes estimating these objects by imposing linear models for these nuisance functions.  In particular, we make the following assumption.
\begin{assumption}[Linearity Conditions for Imputation] \label{ass:imputation-linearity}
    The outcome regression for the untreated group is given by
    \begin{align*}
        m_0(X_{t^*},X_{t^*-1},Z) = X_{t^*}\beta_1 + X_{t^*-1}\beta_2+Z'\beta_3,
    \end{align*}
    and the conditional mean of the post-treatment bad control in the untreated group follows
    \begin{align*}
        \E[X_{t^*}\mid X_{t^*-1},W,Z,D=0] = X_{t^*-1}\gamma_1+W'\gamma_2 + Z'\gamma_3.
    \end{align*}
\end{assumption}
We estimate estimate $(\beta_1,\beta_2,\beta_3)$ by ordinary least squares in the untreated sample from a regression of $\Delta Y_{t^*}$ on $(X_{t^*},X_{t^*-1},Z)$, and then $(\gamma_1,\gamma_2,\gamma_3)$ by ordinary least squares in the untreated sample from a regression of $X_{t^*}$ on $(X_{t^*-1},W,Z)$. Let
\begin{align*}
    \widehat \E[X_{t^*} \mid X_{t^*-1}, W, Z, D=0]
    = X_{t^*-1}\widehat{\gamma}_1+W'\widehat\gamma_{2} + Z'\widehat\gamma_{3}
\end{align*}
denote the fitted conditional mean of $X_{t^*}$ for untreated units. Then, based on \Cref{eqn:nu0}, we construct
\begin{align*}
    \widehat\nu_0(X_{t^*-1},W,Z) = \widehat\E[X_{t^*} \mid X_{t^*-1}, W, Z, D=0] \widehat\beta_{1} + X_{t^*-1} \widehat\beta_{2} + Z'\widehat\beta_{3}.
\end{align*}

Let $n_1 :=\sum_{i=1}^n D_i$ denote the number of treated units in the sample, and define
\begin{align}
    \widehat m_1 &= \frac{1}{n_1}\sum_{i=1}^n D_i\,\Delta Y_{it^*},
    \label{eq:ra-m1hat} \\
    \widehat\tau_{ra} &= \frac{1}{n_1}\sum_{i=1}^n D_i\,\widehat\nu_0(X_{i,t^*-1},W_i,Z_i). \label{eq:ra-nu-hat}
\end{align}
The imputation estimator of the average treatment effect on the treated is then
\begin{equation*}
    \widehat{\ATT}_{ra} = \widehat m_1-\widehat\tau_{ra}.
\end{equation*}
In \Cref{ass:ra-regularity} in the Supplementary Appendix, we provide additional regularity conditions.  The following proposition provides the asymptotic properties of $\widehat{\ATT}_{ra}$.

\begin{proposition}\label{prop:imputation-an} Under Assumptions \ref{ass:sampling}, \ref{ass:conditional-parallel-trends}, \ref{ass:cov-unc}, \ref{ass:overlap2}, \ref{ass:imputation-linearity}, and \ref{ass:ra-regularity}, the imputation estimator $\widehat{\ATT}_{ra}$ is consistent and satisfies
\begin{align*}
    \sqrt{n}\big(\widehat{\ATT}_{ra} - \ATT\big) \xrightarrow{d} \mathcal{N}(0, \boldsymbol{\Omega}),
\end{align*}
where $\boldsymbol{\Omega} = \mathrm{Var}(\psi^{ra})$ and $\psi^{ra}$ is the influence function defined in Equation \ref{eq:psi-ra-final} in the Supplementary Appendix.
\end{proposition}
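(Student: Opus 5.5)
The argument is a standard M-estimation / delta-method argument for a two-step plug-in estimator; the identification step is already supplied by \Cref{thm:att-cov-unc}. First I would establish identification of the target under \Cref{ass:imputation-linearity}. By \Cref{thm:att-cov-unc} and linearity of $m_0$,
\begin{align*}
\nu_0(X_{t^*-1},W,Z) = \E[X_{t^*}\mid X_{t^*-1},W,Z,D=0]\beta_1 + X_{t^*-1}\beta_2 + Z'\beta_3 = (X_{t^*-1}\gamma_1+W'\gamma_2+Z'\gamma_3)\beta_1 + X_{t^*-1}\beta_2+Z'\beta_3 .
\end{align*}
This step uses that $X_{t^*}=X_{t^*}(0)$ when $D=0$. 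Hence $\ATT = \E[\Delta Y_{t^*}\mid D=1] - \E[\nu_0\mid D=1]$, and $\ATT$ is a smooth function $h(\theta)$ of the finite-dimensional parameter $\theta=(p,\, \E[D\Delta Y_{t^*}],\, \E[D X_{t^*-1}],\, \E[DW],\, \E[DZ],\, \beta,\, \gamma)$ with $p=\P(D=1)$. Here $\beta$ and $\gamma$ are population OLS coefficients in the $D=0$ subpopulation, and overlap (\Cref{ass:overlap2}) together with the rank conditions in \Cref{ass:ra-regularity} guarantee that they are well defined.

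Second, I would write $\widehat{\ATT}_{ra}=h(\widehat\theta)$, where every component of $\widehat\theta$ is a sample moment or solves a sample moment equation. The relevant equations are $\sum_i(1-D_i)R_i(\Delta Y_{it^*}-R_i'\beta)=0$ with $R=(X_{t^*},X_{t^*-1},Z')'$, and $\sum_i(1-D_i)Q_i(X_{it^*}-Q_i'\gamma)=0$ with $Q=(X_{t^*-1},W',Z')'$. Under i.i.d.\ sampling (\Cref{ass:sampling}) and the finite-moment and full-rank conditions, the WLLN gives consistency of $\widehat\theta$, and the continuous mapping theorem then gives consistency of $\widehat{\ATT}_{ra}$. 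Standard OLS expansions give
\begin{align*}
\sqrt n(\widehat\beta-\beta)=\E[(1-D)RR']^{-1}\tfrac{1}{\sqrt n}\textstyle\sum_i(1-D_i)R_i e_i+o_p(1),
\end{align*}
where $e_i$ is the OLS residual $\Delta Y_{it^*}-R_i'\beta$. The expansion for $\widehat\gamma$ is analogous, with residual $X_{it^*}-Q_i'\gamma$.

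Third, I would linearize $h$ to obtain the influence function. The term $\widehat m_1$ contributes $D(\Delta Y_{t^*}-\E[\Delta Y_{t^*}\mid D=1])/p$. The term $\widehat\tau_{ra}$ is a ratio of sample averages of $D\widehat\nu_0(\cdot)$ to $n_1/n$, and its linearization has three pieces:
\begin{itemize}
\item the own-sampling piece $D(\nu_0-\tau)/p$;
\item the estimation effect of $\widehat\gamma$, which equals $\beta_1\E[Q'\mid D=1]$ times the $\gamma$-expansion;
\item the estimation effect of $\widehat\beta$, which equals $(\E[Q'\mid D=1]\gamma,\ \E[X_{t^*-1}\mid D=1],\ \E[Z'\mid D=1])$ times the $\beta$-expansion.
\end{itemize}
Note that the $\widehat\beta$ piece multiplies $\widehat\beta_1$ by the \emph{fitted} $X_{t^*}$, which produces the cross term $\widehat\gamma\widehat\beta_1$. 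I would handle this cross term by a product-rule expansion $\widehat\gamma\widehat\beta_1-\gamma\beta_1=(\widehat\gamma-\gamma)\beta_1+\gamma(\widehat\beta_1-\beta_1)+O_p(n^{-1})$. Summing all pieces gives $\psi^{ra}$ with mean zero. The multivariate CLT then yields $\sqrt n(\widehat{\ATT}_{ra}-\ATT)\to_d\mathcal N(0,\mathrm{Var}(\psi^{ra}))$, with the delta method justified by differentiability of $h$ at $\theta$, which requires $p>0$.

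The main obstacle is bookkeeping rather than any deep step. The delicate points are the cross term just described and making sure the remainder terms are $o_p(n^{-1/2})$. Each remainder is a product of two $O_p(n^{-1/2})$ factors, so it suffices to have finite second moments of $(R,Q,\Delta Y_{t^*},X_{t^*})$, which is what I would take \Cref{ass:ra-regularity} to provide. Finally, I would verify that the resulting expression coincides with Equation \ref{eq:psi-ra-final}.
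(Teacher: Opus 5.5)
Your proposal is correct and follows essentially the same route as the paper. You write $\widehat{\ATT}_{ra}$ as a smooth function of treated-group sample moments and the two untreated-group OLS estimators $(\widehat\beta,\widehat\gamma)$, linearize it (handling the $\widehat\gamma\widehat\beta_1$ cross term so that you get the own-sampling piece plus the two estimation effects you list), and apply the CLT to the resulting influence function $\psi^{ra}$.

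One small correction to your last paragraph: finite second moments of $(R,Q,\Delta Y_{t^*},X_{t^*})$ do not by themselves give $\sqrt{n}$-rates for the OLS steps. You need the OLS scores to have finite variance, i.e.\ $\E[(1-D)\|R\|^2 e^2]<\infty$ and $\E[(1-D)\|Q\|^2 u^2]<\infty$ with $u=X_{t^*}-Q'\gamma$ (for example, finite fourth moments), and this is the kind of condition the regularity assumption should supply.
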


\subsection{Neyman Orthogonal Estimator}\label{sec:doubly-robust}\label{sec:estimation-inference:ml}

The imputation estimator in \Cref{sec:estimation-inference:ra} is consistent under correct specification of $m_0(X_{t^*},X_{t^*-1},Z)$ and $\E[X_{t^*} \mid X_{t^*-1}, W, Z, D=0]$. If these models are misspecified, however, the resulting bias does not vanish with sample size. In this section, we provide a Neyman Orthogonal estimator for the $\ATT$, allowing for (i) doubly robust estimation (i.e., consistent estimation when some of the nuisance functions are misspecified), and (ii) using machine learning to estimate the nuisance functions. We introduce the following additional notation for this section.  Let $\pi := \P(D=1)$, and let
\begin{align}
    p(X_{t^*-1},W,Z) = \P(D=1 \mid X_{t^*-1}, W, Z). \label{eqn:p2-def}
\end{align}
$p$ is the propensity score conditional on the covariates $(X_{t^*-1},W,Z)$.  Additionally, define
\begin{align}\label{eqn:omega-def}
    \omega_0\big(X_{t^*}(0),X_{t^*-1},Z\big) := \E\left[\frac{p(X_{t^*-1},W,Z)}{1-p(X_{t^*-1},W,Z)} \midbar X_{t^*}(0), X_{t^*-1}, Z, D=0\right].
\end{align}
$\omega_0$ is the conditional expectation of the odds ratio based on $p(X_{t^*-1}, W, Z)$ given $(X_{t^*}, X_{t^*-1}, Z)$ for the untreated group.  To conserve on notation, let $O = (Y_{t^*-1}, Y_{t^*}, X_{t^*-1}, X_{t^*}, W, Z, D)$ and define
\begin{align}
    \varphi_1(O;\eta) &= \frac{D}{\pi} \Delta Y_{t^*} - \frac{D}{\pi} \nu_0(X_{t^*-1},W,Z) \nonumber \\
    &\quad - \frac{1-D}{\pi} \left(m_0(X_{t^*},X_{t^*-1},Z)-\nu_0(X_{t^*-1},W,Z)\right) \frac{p(X_{t^*-1},W,Z)}{1-p(X_{t^*-1},W,Z)} \nonumber \\
    & \quad - \frac{1-D}{\pi} \left(\Delta Y_{t^*}-m_0(X_{t^*},X_{t^*-1},Z)\right) \omega_0(X_{t^*},X_{t^*-1},Z). \label{eqn:eif1}
\end{align}

The following proposition provides an alternative estimand for the $\ATT$.

\begin{proposition}\label{prop:dr-att-delta} Under \Cref{ass:sampling,ass:conditional-parallel-trends,ass:cov-unc,ass:overlap2},
\begin{align*}
    \ATT = \E\big[ \varphi_1(O;\eta)\big]
\end{align*}
\end{proposition}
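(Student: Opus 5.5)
The plan is to take expectations of the four terms in $\varphi_1$ one at a time and compare the result with the expression for the $\ATT$ in \Cref{thm:att-cov-unc}. The first two terms give $\E[D\Delta Y_{t^*}]/\pi - \E[D\nu_0(X_{t^*-1},W,Z)]/\pi$. Conditioning on $D=1$, this equals $\E[\Delta Y_{t^*}\mid D=1] - \E[\nu_0(X_{t^*-1},W,Z)\mid D=1]$. By \Cref{thm:att-cov-unc} and the definition of $\nu_0$ in \Cref{eqn:nu0}, that difference is exactly the $\ATT$. It therefore suffices to show that the third and fourth terms each have mean zero.

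For the third term, I will condition on $(X_{t^*-1},W,Z,D)$. The factor $(1-D)\,p/(1-p)$ is measurable with respect to this $\sigma$-field. We then need
\[
\E\big[(1-D)\{m_0(X_{t^*},X_{t^*-1},Z)-\nu_0(X_{t^*-1},W,Z)\}\mid X_{t^*-1},W,Z,D\big]=0 .
\]
On $D=0$ this holds by the definition of $\nu_0$ as the $D=0$ conditional mean of $m_0$ given $(X_{t^*-1},W,Z)$. On $D=1$ the factor $(1-D)$ kills the term. \Cref{ass:overlap2} ensures that $p/(1-p)$ is finite, so the expectation is well defined.

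For the fourth term, I will condition on $(X_{t^*},X_{t^*-1},Z,D)$. On $D=0$, $X_{t^*}=X_{t^*}(0)$, so $\omega_0(X_{t^*},X_{t^*-1},Z)$ is measurable with respect to this $\sigma$-field. Also $\E[\Delta Y_{t^*}-m_0\mid X_{t^*},X_{t^*-1},Z,D=0]=0$ by the definition of $m_0$ in \Cref{eqn:m0}. Hence this term has mean zero by iterated expectations.

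The key point to check is the fourth term. The weight $\omega_0$ is a function only of $(X_{t^*},X_{t^*-1},Z)$, not of $W$. For that reason the residual $\Delta Y_{t^*}-m_0$ needs to have mean zero only given $(X_{t^*},X_{t^*-1},Z)$ within $D=0$, which is exactly how $m_0$ is defined, and no extra assumption is needed. Because both correction terms vanish on their own, the identification statement requires nothing beyond \Cref{thm:att-cov-unc}.
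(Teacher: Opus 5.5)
Your proof is correct and follows essentially the natural route the paper relies on (its own proof is in the Supplementary Appendix, not reproduced here). The first two terms reproduce the estimand of \Cref{thm:att-cov-unc}, and the two augmentation terms have mean zero purely because of how $\nu_0$ and $m_0$ are defined as conditional means within the untreated group. One small precision: \Cref{ass:overlap2} gives $p/(1-p)<\infty$ almost surely and makes $\nu_0$ well defined on the treated covariate support, but integrability of the weighted terms comes from the paper's standing assumption that all expectations exist, not from overlap alone.
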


Next, let $\widehat\eta = (\widehat{m}_0, \widehat\nu_0, \widehat p, \widehat\omega_0)$ denote estimators of the nuisance functions in \Cref{eqn:eif1}, $\widehat\pi = n^{-1}\sum_{i=1}^n D_i$, and $\widehat\tau = n_1^{-1} \sum_{i=1}^n D_i\,\widehat\nu_{0,i}$,
where $n_1=\sum_{i=1}^n D_i$ and $\widehat\nu_{0,i} = \widehat\nu_0(X_{i,t^*-1},W_i,Z_i)$. The sample analog of the expression in \Cref{prop:dr-att-delta} is
{\small \begin{align}
    \widehat{\ATT}_{dr} &= \widehat m_1 - \widehat\tau
       - \frac{1}{n}\sum_{i=1}^n \Bigg[\frac{1-D_i}{\widehat\pi} \left(\widehat{m}_{0,i}-\widehat\nu_{0,i}\right) \frac{\widehat p_{i}}{1-\widehat p_{i}} + \frac{1-D_i}{\widehat\pi} \left(\Delta Y_{it^*}-\widehat{m}_{0,i}\right) \widehat\omega_{0,i} \Bigg],
    \label{eq:att-dr-sample}
\end{align}
}where $\widehat m_1$ is the treated mean of $\Delta Y_{t^*}$ as in \Cref{eq:ra-m1hat}, and the remaining shorthand is $\widehat{m}_{0,i} = \widehat{m}_0(X_{it^*},X_{i,t^*-1},Z_i)$, $\widehat p_{i} = \widehat p(X_{i,t^*-1},W_i,Z_i)$, and $\widehat\omega_{0,i} = \widehat\omega_0(X_{it^*},X_{i,t^*-1},Z_i)$.

Next, we provide a double robustness result for $\widehat{\ATT}_{dr}$.  Let $\eta(\theta) = \big(m_0(\theta_m), \nu_0(\theta_\nu), p(\theta_p), \omega_0(\theta_\omega)\big)$ denote parametric working models for the nuisance functions.  A leading example would be to use the same linear models as in \Cref{ass:imputation-linearity} (from our imputation estimator) for $m_0(\theta_m)$ and $\nu_0(\theta_\nu)$, $p(\theta_p)$ to be a logit working model for the propensity score, and $\omega_0(\theta_\omega)$ to come from a regression of the estimated odds ratio, $\hat{p}(\hat{\theta}_p)/(1-\hat{p}(\hat{\theta}_p))$ on $X_{t^*},X_{t^*-1}$, and $Z$ using the untreated group.  We allow for the possibility that the working models are misspecified, e.g., $m_0(\theta_m) \neq m_0$.  Let $\hat{\eta}(\hat{\theta}) = \big(\hat{m}_0(\hat{\theta}_m), \hat{\nu}_0(\hat{\theta}_\nu), \hat{p}(\hat{\theta}_p), \hat{\omega}_0(\hat{\theta}_\omega)\big)$ denote the estimated nuisance functions given the parametric working models discussed above, and let $\widehat{\ATT}_{dr}(\hat{\theta})$ denote the estimator based on using the parametric working models in \Cref{eq:att-dr-sample}.  Under mild regularity conditions, we have that $\hat{\eta}(\hat{\theta}) \xrightarrow{p} \eta(\theta)$.  We also have that $\widehat{\ATT}_{dr}(\hat{\theta}) \xrightarrow{p} \ATT_{dr}(\theta)$, where $\ATT_{dr}(\theta)$ is the probability limit of $\widehat{\ATT}_{dr}(\hat{\theta})$ given the parametric working models for the nuisance functions and allowing for the possibility $\ATT_{dr}(\theta) \neq \ATT$.

\begin{proposition}\label{prop:doubly-robust} Under \Cref{ass:sampling,ass:conditional-parallel-trends,ass:cov-unc,ass:overlap2}, given parametric working models for the nuisance functions $\eta(\theta)$, and assuming that $\hat{\eta}(\hat{\theta}) \xrightarrow{p} \eta(\theta)$ and $\widehat{\ATT}_{dr}(\hat{\theta}) \xrightarrow{p} \ATT_{dr}(\theta)$, and that either
    \begin{itemize}
        \item [(i)] $\big(m_0(\theta_m), \nu_0(\theta_\nu)\big) = \big(m_0, \nu_0\big)$
        \item [(ii)] $\big(p(\theta_p), \omega_0(\theta_\omega)\big) = \big(p, \omega_0\big)$
    \end{itemize}
    Then, $\widehat{\ATT}_{dr}(\hat{\theta}) \xrightarrow{p} \ATT$.
\end{proposition}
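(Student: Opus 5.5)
By continuity, the limit $\ATT_{dr}(\theta)$ is obtained by evaluating $\E[\varphi_1(O;\bar\eta)]$ at the pseudo-true nuisances $\bar\eta := \eta(\theta) = (\bar m_0, \bar\nu_0, \bar p, \bar\omega_0)$, with $\pi$ consistently estimated by $\widehat\pi$. So it suffices to show $\E[\varphi_1(O;\bar\eta)] = \ATT$ in each of cases (i) and (ii). \Cref{prop:dr-att-delta} already gives $\E[\varphi_1(O;\eta)] = \ATT$ at the true nuisances. I will therefore compute $\E[\varphi_1(O;\bar\eta)] - \E[\varphi_1(O;\eta)]$ and show that it vanishes in each case.

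\textbf{Case (i): $\bar m_0 = m_0$ and $\bar\nu_0 = \nu_0$, with $\bar p$ and $\bar\omega_0$ arbitrary.}

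1. The first two terms of $\varphi_1$ are then correct. By \Cref{thm:att-cov-unc}, their expectation equals $\ATT$.
2. The fourth term has mean zero. Conditional on $(X_{t^*},X_{t^*-1},Z,D=0)$, the residual $\Delta Y_{t^*} - m_0$ has mean zero, and $\bar\omega_0$ is a function of exactly these variables.
3. The third term has mean zero. Condition on $(X_{t^*-1},W,Z,D=0)$. By the definition of $\nu_0$ in \Cref{eqn:nu0}, $\E[m_0 - \nu_0 \mid X_{t^*-1},W,Z,D=0] = 0$, and $\bar p/(1-\bar p)$ is a function of $(X_{t^*-1},W,Z)$.

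This case is routine.

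\textbf{Case (ii): $\bar p = p$ and $\bar\omega_0 = \omega_0$, with $\bar m_0$ and $\bar\nu_0$ arbitrary.}

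1. Write the odds ratio as $r := p/(1-p)$. The key tool is the reweighting identity, which follows from Bayes' rule and \Cref{ass:overlap2}: for any integrable $h(X_{t^*},X_{t^*-1},W,Z)$,
\begin{align*}
\E\big[(1-D)\, r(X_{t^*-1},W,Z)\, h \big] = \E[D\, \tilde h],
\end{align*}
where $\tilde h$ is $h$ evaluated at $X_{t^*}(0)$, after invoking \Cref{ass:cov-unc}.
2. The $\bar\nu_0$ terms cancel. The $-\frac{D}{\pi}\bar\nu_0$ term and the $+\frac{1-D}{\pi}\bar\nu_0\, r$ term are equal and opposite, since $\bar\nu_0$ depends only on $(X_{t^*-1},W,Z)$ and $\E[(1-D)r\,g(X_{t^*-1},W,Z)] = \E[D\,g]$.
3. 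The $\bar m_0$ terms cancel. What remains is $-\frac{1-D}{\pi}\bar m_0\, r + \frac{1-D}{\pi}\bar m_0\,\omega_0$. By the definition of $\omega_0$ in \Cref{eqn:omega-def}, and by iterating expectations over $(X_{t^*},X_{t^*-1},Z,D=0)$, we get $\E[(1-D)\bar m_0\, r] = \E[(1-D)\bar m_0\,\omega_0]$, so these cancel as well.
4. The leftover is $\E[D\Delta Y_{t^*}]/\pi - \E[(1-D)\Delta Y_{t^*}\,\omega_0]/\pi$. By the same iterated-expectation step, $\E[(1-D)\Delta Y_{t^*}\,\omega_0] = \E[(1-D)\,m_0\, r]$, and the reweighting identity turns this into $\E[D\, m_0(X_{t^*}(0),X_{t^*-1},Z)]$.
5. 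Dividing by $\pi$ and applying \Cref{eqn:unidentified-ATT} under \Cref{ass:conditional-parallel-trends} yields $\ATT$.

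\textbf{Main obstacle.} The delicate step is the reweighting identity. Its change of measure passes from $D=0$ to $D=1$ conditional on $(X_{t^*-1},W,Z)$, and this requires that the conditional law of $X_{t^*}$ given $D=0$ coincides with that of $X_{t^*}(0)$ given $D=1$. That is exactly \Cref{ass:cov-unc}. I will state it as a lemma and prove it by writing
\begin{align*}
\E[(1-D)\,r\,h] = \E\big[\,p\,\E[h \mid X_{t^*-1},W,Z,D=0]\,\big]
\end{align*}
and then substituting \Cref{ass:cov-unc} into the inner conditional expectation.
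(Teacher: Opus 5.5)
Your proposal is correct and is, in substance, the standard argument the paper relies on (its own proof is in the Supplementary Appendix, not in the text above): you identify the limit $\ATT_{dr}(\theta)$ with $\E[\varphi_1(O;\eta(\theta))]$ and check each case separately. Under (i), iterated expectations over $(X_{t^*},X_{t^*-1},Z,D=0)$ and $(X_{t^*-1},W,Z,D=0)$ remove the augmentation terms; under (ii), odds-ratio reweighting and the definition of $\omega_0$ cancel the $\bar\nu_0$ and $\bar m_0$ terms, and \Cref{ass:cov-unc} with \Cref{ass:conditional-parallel-trends} identifies what remains.

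Two small points to tidy. Your step 4 actually uses two conditioning steps: first $\Delta Y_{t^*}\mapsto m_0$, then $\omega_0 \mapsto p/(1-p)$. And the ``by continuity'' step equating $\ATT_{dr}(\theta)$ with $\E[\varphi_1(O;\eta(\theta))]$ is where the unstated regularity (uniform law of large numbers) conditions enter.
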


\Cref{prop:dr-att-delta} shows that $\widehat{\ATT}_{dr}$ is doubly robust.  Notice that condition (i) in the proposition corresponds to \Cref{ass:imputation-linearity} for our imputation estimator in the previous section.  However, condition (ii) in the proposition provides an alternative setting, effectively based on correctly re-weighting, for $\widehat{\ATT}_{dr}$ to be consistent.  Also, notice that $\nu_0$ and $m_0$ are functionally related.  In general, if $m_0(\theta_m)$ is misspecified, then it will also be the case that $\nu_0(\theta_\nu)$ is misspecified.  Likewise for the re-weighting terms, if $p(\theta_p)$ is misspecified, then $\omega_0(\theta_\omega)$ will generally be misspecified as well.

Next, we move to estimating $\ATT$ by estimating the nuisance functions $\eta$ using machine learning, based on the estimand in \Cref{prop:dr-att-delta}.  This approach allows us to bypass any strong functional form assumptions in estimating $\ATT$.  We summarize our estimator in the following algorithm.

\begin{center}
\fbox{\begin{minipage}{0.92\textwidth}
\small
\textbf{Algorithm 1: DDML Estimator of $\ATT$ under Covariate Unconfoundedness}

\medskip
\textbf{Input:} Data $\{(\Delta Y_{it^*}, D_i, X_{it^*}, X_{i,t^*-1}, W_i, Z_i)\}_{i=1}^n$; number of folds $K$.

\medskip
\begin{enumerate}
\item Partition $\{1,\ldots,n\}$ into $K$ folds $\{\mathcal{I}_1,\ldots,\mathcal{I}_K\}$. Set $\widehat{\pi} = n^{-1}\sum_{i=1}^n D_i$.

\item For each fold $k = 1,\ldots,K$, estimate nuisance functions on training sample $\mathcal{I}_{-k}$:

(a) \textit{First stage.} Estimate $\widehat{m}_0^{-k}$ using untreated units, and $\widehat{p}^{-k}$ using all units.

(b) \emph{Second stage.} Given the first stage estimate of $\widehat{m}_0^{-k}$, estimate $\widehat\nu_0^{-k}$.  Likewise, given the first stage estimate of $\widehat{p}^{-k}$, estimate $\widehat\omega_0^{-k}$, both using untreated units.

(c) \emph{Third Stage.} Evaluate the doubly robust score $\widehat\varphi_{1,i}^{k}$ on held-out fold $\mathcal{I}_k$ based on \Cref{eqn:eif1} with cross-fitted nuisance estimates, $(\widehat{m}_0^{-k}, \widehat\nu_0^{-k}, \widehat p^{-k}, \widehat\omega_0^{-k})$.

\item Compute $\widehat{\ATT}_{dr} = n^{-1}\sum_{k=1}^K \sum_{i \in \mathcal{I}_k} \widehat\varphi_{1,i}^{k}$.

\item Compute $\widehat{V}_{dr} = n^{-1}\sum_{i=1}^n \widehat\varphi_i^2$, and $\text{se}(\widehat{\ATT}) = \sqrt{\widehat{V}_{dr}/n}$, where $\widehat\varphi_i = \widehat\varphi_{1,i} - \widehat{\ATT}_{dr} - \frac{\widehat{\ATT}_{dr}}{\widehat\pi}(D_i-\widehat\pi)$
\end{enumerate}
\end{minipage}}
\end{center}

\bigskip

The above algorithm implements a cross-fitting estimation procedure as in \textcite{chernozhukov-etal-2018}.  We also make the following high-level assumption about the convergence rates of the nuisance functions.

\begin{assumption}[Product Rate for Nuisance Estimators]\label{ass:product-rate}
The nuisance estimators $\widehat\eta = (\widehat{m}_0, \widehat\nu_0, \widehat p, \widehat\omega_0)$ satisfy
\begin{align*}
\text{(i)}\quad &\|\widehat{m}_0 - m_0\|_2 \cdot \|\widehat\omega_0 - \omega_0\|_2 = o_p(n^{-1/2}), \\
\text{(ii)}\quad &\|\widehat\nu_0 - \nu_0\|_2 \cdot \|\widehat p - p\|_2 = o_p(n^{-1/2}), \\
\text{(iii)}\quad &\|\widehat{m}_0 - m_0\|_2 \cdot \|\widehat p - p\|_2 = o_p(n^{-1/2}).
\end{align*}
\end{assumption}
\Cref{ass:product-rate} is similar to the multiplicative convergence rate requirements that are commonly found in the double/de-biased machine learning literature \parencite{chernozhukov-etal-2018}.  All three conditions hold when each nuisance function converges at rate $o_p(n^{-1/4})$ in $L^2$ norm, which is achievable by random forests \parencite{wager-athey-2018} and many other nonparametric and machine learners \parencite{chernozhukov-etal-2018}.

Conditions (i)-(iii) of \Cref{ass:product-rate} are not independent conditions.  For example, Condition (ii) concerns the nested nuisance $\widehat\nu_0$. Because $\nu_0 = T_{m_0} m_0$ where $T_{m_0}$ is the conditional-expectation operator in \Cref{eqn:nu0}, the $L^2$ error $\|\widehat\nu_0 - \nu_0\|_2$ contains the error from using $\widehat{m}_0$ in the true operator $T_{m_0}$, which is bounded by $\|\widehat{m}_0 - m_0\|_2$ since conditional expectations are $L^2$-contractions and the second-stage approximation error $\|(\hat{T}_{m_0} - T_{m_0})m_0\|_2$ from estimating the operator itself. Both terms need to converge fast enough for condition~(ii) to be satisfied. This means that poor performance in estimating $m_0$ can spill over into poor performance in estimating $\nu_0$.  Likewise, given the similar functional relationship between $p$ and $\omega_0$, poor performance in estimating $p$ can lead to poor performance in estimating $\omega_0$.

The following proposition provides the asymptotic properties of our ML estimator under the conditions stated above plus the regularity conditions given in \Cref{ass:dr-regularity} in the Supplementary Appendix.

\begin{proposition}\label{prop:dr-an} Under \Cref{ass:sampling,ass:conditional-parallel-trends,ass:cov-unc,ass:overlap2,ass:product-rate}, and \ref{ass:dr-regularity}, $\widehat{\ATT}_{dr}$ is consistent and satisfies
\begin{align*}
    \sqrt{n}\big(\widehat{\ATT}_{dr}-\ATT\big) = \frac{1}{\sqrt{n}}\sum_{i=1}^n \varphi\left(O_i;\eta\right) + o_p(1)
    \xrightarrow{d} \mathcal{N}(0, V_{dr}),
\end{align*}
where \vspace{-20pt}
\begin{align*}
    \varphi(O_i;\eta) := \Big(\varphi_1(O_i,\eta)-\ATT\Big) - \frac{\ATT}{\pi}(D_i-\pi),
\end{align*}
and $V_{dr} = \mathrm{Var}\big(\varphi(O;\eta)\big)$. %
$\widehat{V}_{dr}$, given in Algorithm 1, is consistent for $V_{dr}$.
\end{proposition}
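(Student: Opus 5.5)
The argument follows the standard DML template of \textcite{chernozhukov-etal-2018}. I decompose $\widehat{\ATT}_{dr}$ into an oracle term, a first-order (orthogonality) term, and a second-order remainder. Write $\widehat{\ATT}_{dr} = n^{-1}\sum_i \varphi_1(O_i;\widehat\eta)$ with $\widehat\pi$ in place of $\pi$ and cross-fitted $\widehat\eta$. The first step is to separate out the estimation of $\pi$. Since $\varphi_1$ is linear in $1/\pi$, I write $\varphi_1(O;\eta,\pi) = \tilde\varphi(O;\eta)/\pi$. A delta-method expansion of $1/\widehat\pi$ then produces the term $-\frac{\ATT}{\pi}(D_i-\pi)$, because $\E[\tilde\varphi(O;\eta)] = \pi \ATT$ by \Cref{prop:dr-att-delta}. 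This accounts for the second piece of $\varphi(O_i;\eta)$.

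Next, for each fold $k$, I split
\begin{align*}
\frac{1}{n}\sum_{i\in\mathcal{I}_k}\big[\tilde\varphi(O_i;\widehat\eta^{-k})-\tilde\varphi(O_i;\eta)\big]
\end{align*}
into two parts. The first is an empirical-process term, which is the term minus its conditional mean given $\mathcal{I}_{-k}$. The second is the bias term $\E[\tilde\varphi(O;\widehat\eta^{-k})-\tilde\varphi(O;\eta)\mid \mathcal{I}_{-k}]$. The empirical-process term is $o_p(n^{-1/2})$ by conditional Chebyshev. The argument is that, conditional on the training fold, the summands are i.i.d. with variance bounded by the $L^2$ distances $\|\widehat\eta-\eta\|_2$, which tend to zero. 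This step uses the overlap bound in \Cref{ass:overlap2} and the boundedness and moment conditions in \Cref{ass:dr-regularity}, so that the odds ratio $\widehat p/(1-\widehat p)$ and $\widehat\omega_0$ are uniformly bounded.

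The main step is to show that the bias term is a sum of products of nuisance errors. I plan to compute $\E[\tilde\varphi(O;\bar\eta)]-\E[\tilde\varphi(O;\eta)]$ for a generic $\bar\eta=(\bar m,\bar\nu,\bar p,\bar\omega)$ by iterated expectations, using three facts. First, $\E[\Delta Y_{t^*}-m_0\mid X_{t^*},X_{t^*-1},Z,D=0]=0$. Second, $\E[m_0-\nu_0\mid X_{t^*-1},W,Z,D=0]=0$. Third, $(1-D)\,p/(1-p)$ reweights untreated units to the treated distribution of $(X_{t^*-1},W,Z)$, and $\omega_0$ is the projection of this odds ratio onto $(X_{t^*},X_{t^*-1},Z)$ among the untreated. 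Collecting terms, I expect the bias to take the form
\begin{align*}
&\E\Big[(1-D)(\bar m - m_0)\big(\omega_0-\bar\omega\big)\Big] + \E\Big[(1-D)(\bar\nu-\nu_0)\Big(\tfrac{\bar p}{1-\bar p}-\tfrac{p}{1-p}\Big)\Big]\\
&\quad - \E\Big[(1-D)(\bar m - m_0)\Big(\tfrac{\bar p}{1-\bar p}-\tfrac{p}{1-p}\Big)\Big].
\end{align*}
The cancellation behind this form is the following. The $-\frac{D}{\pi}\bar\nu$ term is matched by the $+\frac{1-D}{\pi}\bar\nu\,\frac{p}{1-p}$ term. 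The $\frac{1-D}{\pi}\bar m\,\frac{p}{1-p}$ term is matched by $-\frac{1-D}{\pi}\bar m\,\omega_0$, via the projection defining $\omega_0$. The three products above then correspond exactly to conditions (i)--(iii) of \Cref{ass:product-rate}. Applying Cauchy--Schwarz and the Lipschitz property of $p\mapsto p/(1-p)$ on $[0,1-\epsilon]$, each product is $o_p(n^{-1/2})$. I expect this algebra to be the main obstacle, because the nesting means $\bar\nu$ and $\bar m$ are not orthogonalised independently. The identity must be verified carefully, including the cross term in which $\bar m$ meets the misspecified odds ratio. That cross term is exactly why condition (iii) is needed.

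Combining the three pieces gives the stated linear expansion. Asymptotic normality then follows from the Lindeberg--L\'evy CLT, since $\varphi$ has finite variance under \Cref{ass:dr-regularity}. For consistency of $\widehat V_{dr}$, I will show that $n^{-1}\sum_i(\widehat\varphi_i-\varphi_i)^2=o_p(1)$ using the same $L^2$ convergence of the nuisances and the consistency of $\widehat\pi$ and $\widehat{\ATT}_{dr}$, and then apply a law of large numbers to $n^{-1}\sum_i\varphi_i^2$.
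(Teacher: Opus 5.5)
Your proposal is correct and follows the standard cross-fitting DML argument of \textcite{chernozhukov-etal-2018} that the paper invokes: a delta-method step for $\widehat\pi$ produces the $-\frac{\ATT}{\pi}(D_i-\pi)$ term, and an exact second-order bias identity yields three product terms that match conditions (i)--(iii) of \Cref{ass:product-rate}. One harmless slip: the first bias term should be $+\E\big[(1-D)(\bar m-m_0)(\bar\omega-\omega_0)\big]$, the opposite sign from what you wrote (your other two terms are right), and this does not matter because you bound each term in absolute value by Cauchy--Schwarz.
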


\begin{remark}[Monte Carlo Simulations]
    In Supplementary Appendix \ref{app:montecarlo}, we provide Monte Carlo simulations that evaluate the finite sample performance of both the imputation estimator and double/de-biased machine learning that were proposed in this section.
\end{remark}

\begin{remark}[TWFE Regressions]
    The estimators above build on the modern DiD estimators in \textcite{santanna-zhao-2020,callaway-santanna-2021}, rather than traditional two-way fixed effects (TWFE) regressions such as
    \begin{align*}
        Y_{it} = \theta_t + \eta_i + \alpha D_{it} + X_{it} \beta + Z_i \delta_t + e_{it}
    \end{align*}
    where $\theta_t$ is a time fixed effect, $\eta_i$ is a unit fixed effect, and $\alpha$ is the coefficient of interest.  \textcite{caetano-callaway-2025} study this type of specification in detail and point out limitations due to treatment effect heterogeneity and hidden linearity bias (bias that arises because differencing out the unit fixed effect also differences the time-varying covariates).  In our setting with a bad control, an additional bias term arises due to it being affected by the treatment, similar to our discussion in \Cref{sec:approach-use-bad-control}.  If the researcher does not include the bad control in the TWFE regression, then a different bias term arises, similar to our discussion in \Cref{sec:approach-discard-bad-control}.  Finally, because our approach above effectively dealt with the bad control in a distinct step from recovering the $\ATT$, it seems difficult to operationalize a TWFE version of our approaches \Cref{sec:identification,sec:staggered}.
\end{remark}

\section{Application to Job Displacement} \label{sec:application}

We illustrate the methods developed in the paper in an application to job displacement, using data from the National Longitudinal Survey of Youth 1979 (NLSY79). In particular, we study the effect of job displacement on earnings, building on a large literature in economics \parencite[e.g.,][]{jacobson-lalonde-sullivan-1993,stevens-1997,davis-wachter-2011,couch-placzek-2010,barnette-odongo-reynolds-2021,huckfeldt-2022}. We treat a worker's occupation score---the typical log wage paid by their occupation---as a bad control. Occupation is often thought of as a bad control in the job displacement literature as (i) the parallel trends assumption seems more plausible if it holds conditional on the occupation a worker would hold in the absence of job displacement and (ii) job displacement can shift workers into different occupations \parencite{kambourov-manovskii-2009,gathmann-schoenberg-2010}.

\subsection{Data} \label{sec:application-data}

We use biennial data from 1992--2002.  In 1992, NLSY79 respondents were between 28 and 35 years old.  We use the logarithm of yearly earnings as the outcome.  Following the literature, we define job displacement as a worker involuntarily leaving a job through no fault of their own.  The NLSY tracks respondents' jobs, and when a respondent leaves a job, the NLSY records the reason.  We categorize a worker as being displaced if they left any job in the previous two years where the reason was either ``layoff/job eliminated'' or ``plant/company/office/workplace closed.''  And, for example, this excludes workers who quit, were fired, or whose temporary job ended.  We record, at the biennial level, the first year that a worker was displaced from a job.  By construction, this leads to staggered treatment adoption, which we account for in our results below.

To construct occupation score, we supplement the NLSY data with data from the IPUMS USA 1990 5\% sample.  We construct occupation score as the median log hourly wage among employed wage/salary workers between 16 and 64 in that occupation.  The occupation score is constant for a given occupation across time.  For an individual in our main dataset, we construct their occupation score by mapping their observed occupation into the occupation score that we constructed with the IPUMS USA data.  Since individuals can change occupation across time, the individual's occupation score can change across time.

We use a person's race, sex, and years of education as additional covariates.  Our final sample is a balanced panel of 3,231 individuals with positive earnings and non-missing data on the outcome, bad control, and other covariates in every period.  Summary statistics are provided in \Cref{tab:summary_statistics}.  On average, displaced workers have lower earnings than non-displaced workers in 1992 (before any workers in our sample are displaced) and in 2002, with the gap being somewhat larger in 2002 than in 1992.  Similarly, displaced workers have occupations with a lower occupation score than non-displaced workers in 1992 and 2002, though the difference is roughly constant over time.  Compared to non-displaced workers, displaced workers are also less educated, more likely to be black or Hispanic, and less likely to be female.

\setlength{\tabcolsep}{10pt}
\begin{table}[t]
\centering
\caption{Summary Statistics}
\label{tab:summary_statistics}
\begin{threeparttable}
{\small
\begin{tabular}{lcccc}
\toprule
& Displaced & Non-displaced & Diff & se(Diff) \\
\midrule
Log earnings, 1992 & 9.74 & 9.91 & -0.17 & 0.04 \\
 & (0.88) & (0.79) & & \\
Log earnings, 2002 & 10.30 & 10.52 & -0.22 & 0.03 \\
 & (0.84) & (0.82) & & \\
Occupation score, 1992 & 2.25 & 2.32 & -0.07 & 0.01 \\
 & (0.33) & (0.35) & & \\
Occupation score, 2002 & 2.30 & 2.36 & -0.06 & 0.01 \\
 & (0.32) & (0.34) & & \\
Black (\%) & 29.4 & 23.8 & 5.7 & 1.9 \\
 & (45.6) & (42.6) & & \\
Hispanic (\%) & 19.9 & 16.2 & 3.7 & 1.6 \\
 & (40.0) & (36.9) & & \\
Female (\%) & 43.4 & 47.3 & -3.8 & 2.1 \\
 & (49.6) & (49.9) & & \\
Education (years) & 13.57 & 14.12 & -0.56 & 0.10 \\
 & (2.32) & (2.63) & & \\[5pt]
N & 748 & 2,483 & & \\
\bottomrule
\end{tabular}
}
\begin{tablenotes}
\item \footnotesize \textit{Notes:} The table reports summary statistics for displaced and non-displaced workers.  The column ``Displaced'' refers workers who were displaced from their job in any period from 1994--2002.  ``Non-displaced'' refers to workers who are not displaced in any period from 1992--2002 (we drop workers who become displaced in 1992).  Each column reports the mean and standard deviation (in parentheses) for displaced and non-displaced workers, respectively.  For log earnings and occupation score, the year indicates which year the mean and standard deviation are for.  The column ``Diff'' reports the difference between the means for the two groups, and the column ``se(Diff)'' reports its standard error.
\end{tablenotes}
\end{threeparttable}
\end{table}

\subsubsection*{Implementation Details}

For our main results below, we estimate the effect of job displacement on $\log(Earnings)$ using nine different estimators:

\begin{enumerate}
    \item \texttt{TWFE:\,include BC} --- TWFE regression that includes $D_{it}$ and $X_{it}$ as regressors
    \item \texttt{TWFE:\,exclude BC} --- TWFE regression that only includes $D_{it}$ as a regressor
    \item \texttt{Imp:\,include BC} --- Imputation estimator based on \textcite{callaway-santanna-2021} that includes the bad control in both periods as a covariate.
    \item \texttt{Imp:\,exclude BC} --- Imputation estimator based on \textcite{callaway-santanna-2021} that does not include the bad control in either period as a covariate.
    \item \texttt{PT for X} --- Imputation estimator under parallel trends for the bad control as in \Cref{ass:bad-control-parallel-trends} and based on \Cref{prop:att-under-bad-control-parallel-trends} in the Supplementary Appendix.
    \item \texttt{ML:\,pre-treatment} --- Machine learning estimator under simple covariate unconfoundedness that includes the bad control in the pre-treatment period as a covariate with group-time average treatment effects estimated based on \textcite{santanna-zhao-2020} with nuisance functions estimated by random forest (we use the same implementation as \texttt{ML} below, but without its bad control nuisance steps).
    \item \texttt{Imputation} --- Imputation estimator under covariate unconfoundedness with group-time average treatment effects estimated based on \Cref{sec:estimation-inference:ra}.
    \item \texttt{DR} --- Doubly robust estimator with group-time average treatment effects estimated based on \Cref{sec:estimation-inference:ml} with nuisance functions specified parametrically.
    \item \texttt{ML} --- Machine learning estimator with group-time average treatment effects estimated based on \Cref{sec:estimation-inference:ml} with nuisance functions estimated using machine learning.\footnote{In particular, we use cross-fitting with five folds, estimating each of the nuisance functions by random forests using the R \texttt{grf} package \parencite{grf-2026} with \texttt{grf}'s default tuning parameters. One complication is that $\hat\nu_0^{-k}$ and $\hat\omega_0^{-k}$ are second-step regressions where the outcome depends on the first-step estimates $\hat m_0^{-k}$ and $\hat p^{-k}$, respectively.  This can result in overfitting $\hat\nu_0^{-k}$ and $\hat\omega_0^{-k}$ when the same units are used to estimate the outcome and then again for $\hat\nu_0^{-k}$ or $\hat\omega_0^{-k}$.  To address this, we use each unit's out-of-bag prediction (i.e., its averaged prediction using only first-stage trees that excluded that unit), rather than its predicted value across all trees, as the pseudo-outcome on which $\hat\nu_0^{-k}$ or $\hat\omega_0^{-k}$ is computed. An alternative for machine learners without out-of-bag predictions is to instead further split the training fold itself, as in \textcite{farbmacher-huber-laffers-langen-spindler-2022}.}
\end{enumerate}

The TWFE estimators (1 and 2) do not include any additional covariates, as these are all time-invariant.  The seven other estimators all additionally include the additional covariates (race, sex, and education) that were discussed above.  Estimators 3-9 also all involve estimating group-time average treatment effects and then aggregating them into either the overall ATT or an event study.  Estimators 7-9 are based on our covariate unconfoundedness assumption.  For these estimators, we take $W_i$ to be an individual's $\log(Earnings)$ in the pre-treatment period.\footnote{We implement Estimators 1 and 2 using the R \texttt{fixest} package \parencite{berge-butts-mcdermott-2026}, Estimators 3 and 4 using the R \texttt{ptetools} package \parencite{ptetools-2026}, and Estimators 5-9 using our R \texttt{badcontrols} package \parencite{badcontrols-2026}.}

\subsubsection*{Is the Occupation Score a Bad Control?}

A natural starting point for our analysis is to check whether or not occupation score is affected by the treatment.  To answer this question, we maintain the staggered adoption version of our covariate unconfoundedness assumption in \Cref{ass:multi-cov-unc}.  Then, we implement the first step of our imputation estimator in \Cref{sec:estimation-inference}, which yields an estimate of the overall $\ATT$ of job displacement on occupation score as well as an event study.  These results are provided in \Cref{fig:occ_score_event_study}.

\begin{figure}[th]
    \caption{Occupation Score Event Study}
    \label{fig:occ_score_event_study}
    \begin{center}
        \includegraphics[width=0.75\textwidth]{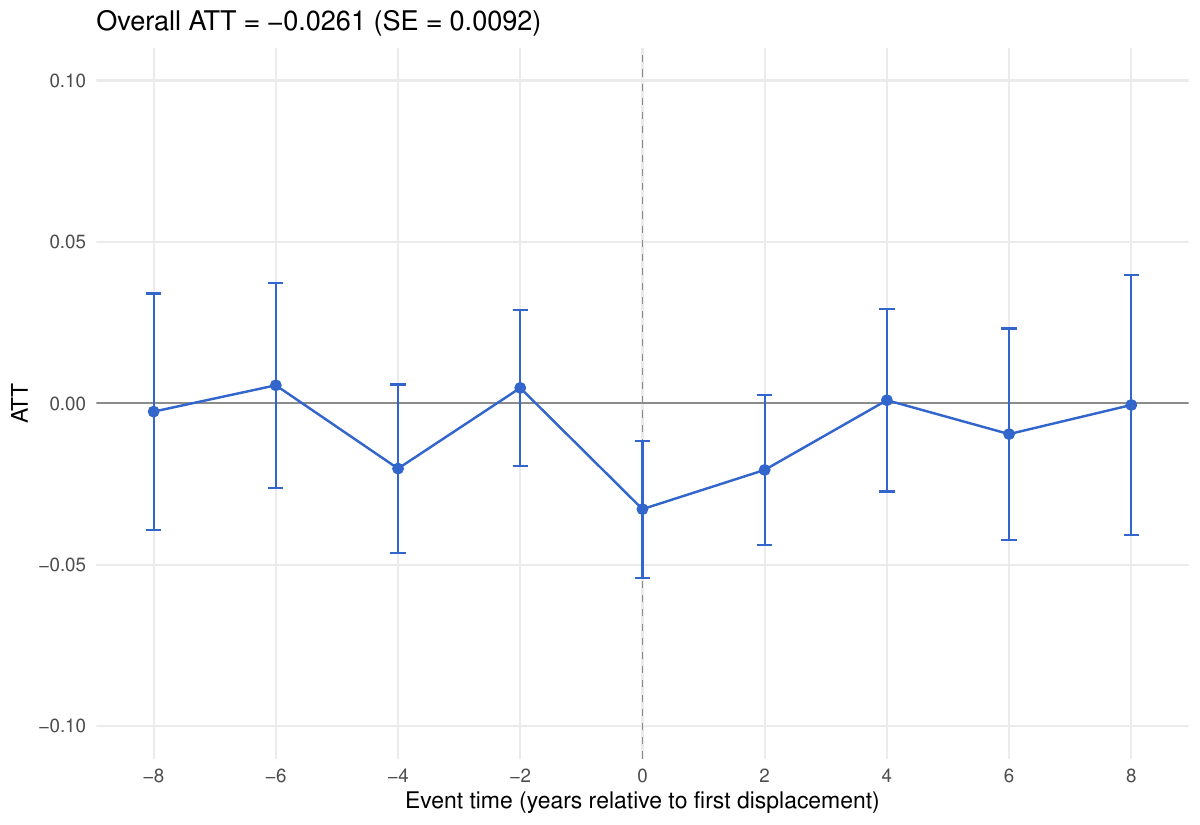}
    \end{center}
    \footnotesize \textit{Notes:} The figure provides estimates of the overall ATT and an event study for occupation score using an imputation estimator based on \Cref{ass:multi-cov-unc}, as discussed in the text.  $e=0$ denotes the period when job displacement occurred.  Post-treatment estimates use the period immediately before displacement as the base period. Pre-treatment estimates are pseudo-$\ATT$'s that instead use the immediately preceding period as the base period (i.e., each is the estimate that would be obtained by treating that period as the start of displacement).  See Remark~\ref{rem:pseudo-att} for more details.
\end{figure}

Our estimates indicate that for displaced workers, on average, job displacement results in their moving to an occupation with 2.6\% lower wages than they would have been in absent job displacement.  Since not all workers change occupations due to job displacement, the effect for workers that do change occupations would be larger.  Turning to the event study, in line with what seems a reasonable expectation, the effect of job displacement is largest in the period in which job displacement occurs, with the effect of job displacement on occupation score reducing in subsequent periods.

\subsubsection*{Main Results}

Next, we provide our estimates of the effect of job displacement on earnings.  We start by providing estimates of the overall $\ATT$ and an event study using our imputation estimator based on covariate unconfoundedness (from Estimator 7 above).

\begin{figure}[th]
    \caption{Imputation Event Study for $\log(Earnings)$}
    \label{fig:imputation_event_study}
    \begin{center}
        \includegraphics[width=0.75\textwidth]{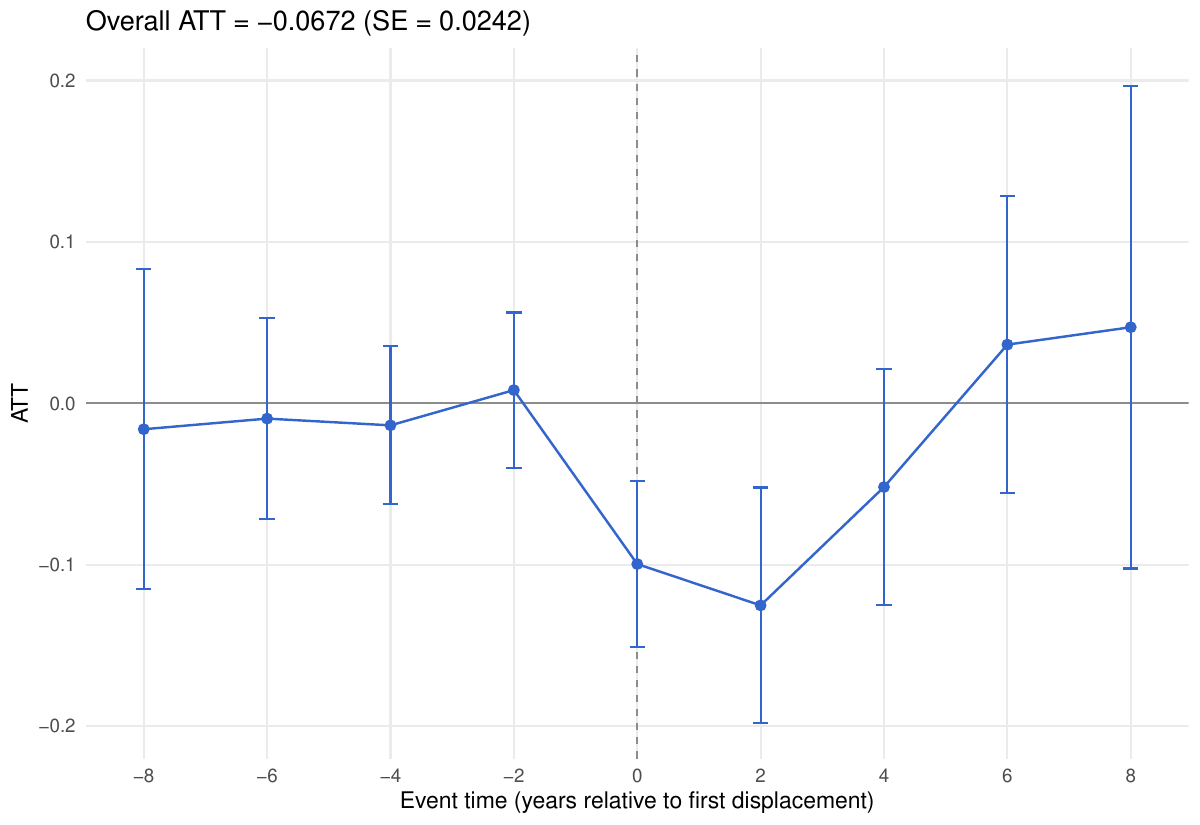}
    \end{center}
    \footnotesize \textit{Notes:} The figure provides estimates of the overall ATT and an event study for $\log(Earnings)$ using our imputation estimator based on covariate unconfoundedness (Estimator 7 above).  $e=0$ denotes the period when job displacement occurred.  Post-treatment estimates use the period immediately before displacement as the base period. Pre-treatment estimates are pseudo-$\ATT$'s that instead use the immediately preceding period as the base period (i.e., each is the estimate that would be obtained by treating that period as the start of displacement).  See Remark~\ref{rem:pseudo-att} for more details.
\end{figure}

The estimates in the figure indicate that, on average, job displacement reduced displaced workers' earnings by about 7\%.  These effects are concentrated in the four years following displacement and are broadly in line with the job displacement literature.

\begin{figure}[th]
    \caption{Overall ATT Estimates Relative to Imputation}
    \label{fig:imputation_diff}
    \begin{center}
        \includegraphics[width=0.75\textwidth]{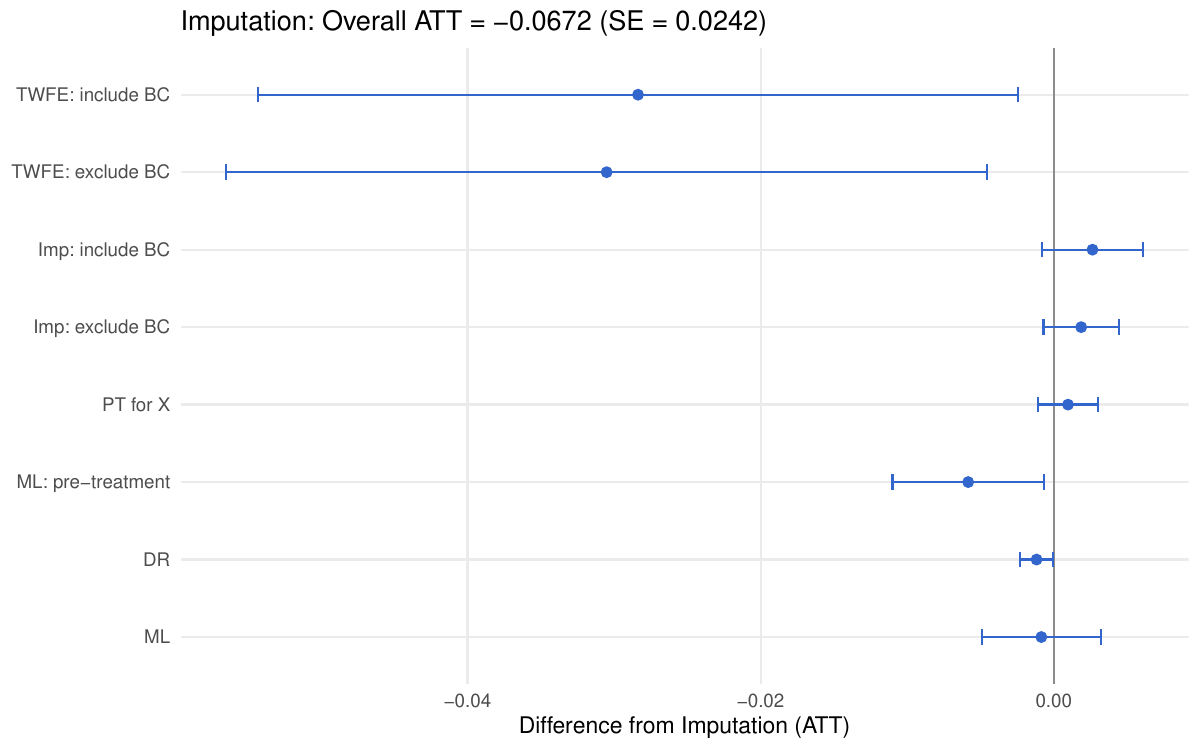}
    \end{center}
    \footnotesize \textit{Notes:} The figure provides the differences between the estimates of the overall $\ATT$ from all of the nine estimators discussed above relative to the imputation estimator in \Cref{fig:imputation_event_study}.  For example, positive differences indicate that the estimator is larger (closer to zero) than our baseline imputation estimate.
\end{figure}

Our main interest, however, is in how sensitive the results are to the different estimators that we discussed above.  We focus on differences in estimates of the overall ATT, and we provide these results in \Cref{fig:imputation_diff}.  To start with, all of the results are qualitatively similar---across estimators, the overall effect of job displacement is always estimated to be between 6 and 10\% lower earnings on average (in unreported event studies, the pattern is similar across estimators too where the effect of job displacement is largest in the first four years following job displacement and smaller and statistically insignificant in subsequent periods).\footnote{Note also that the effects of job displacement are quite large in general which provides a further reason why the qualitative results are similar across estimators.  In an application with smaller main effects, the qualitative results would tend to be more sensitive to the estimator.}  Still, there are reasonably large quantitative differences across estimators.  Both TWFE estimators estimate that job displacement reduces earnings by almost 10\%, resulting in estimates that are roughly 40\% larger in magnitude than the imputation estimates.  The imputation estimators that include or exclude bad control are more similar to our baseline imputation estimator that accounts for occupation being a bad control.  They are both 3-4\% smaller in magnitude than our baseline estimate.  Interestingly, both the TWFE estimates and the imputation estimates that include or exclude the bad control are closer to each other than they are to our baseline imputation estimator, which arguably indicates that the common robustness check of estimating separate regressions that include or exclude the bad control does not necessarily indicate robustness to the tension between including and excluding the bad control that is a main motivation of our paper.

Assuming parallel trends for the bad control (Estimator 5) provides a similar estimate to our baseline imputation estimate.  Under simple covariate unconfoundedness, our ML estimate (Estimator 6) is 9\% larger in magnitude than our baseline imputation estimate, and the difference is statistically significant.  Finally, our doubly robust and ML estimates, which are notably the only ones in \Cref{fig:imputation_diff} that are based on the same covariate unconfoundedness assumption as the baseline imputation estimate, are also quite similar to our baseline imputation estimates.

\section{Conclusion} \label{sec:conclusion}

This paper has considered difference-in-differences identification strategies when the parallel trends assumption holds only after conditioning on time-varying covariates that may have been themselves affected by the treatment. We have shown that the common empirical practices of either including or excluding bad controls can both lead to biased estimates of the average treatment effect on the treated.  We provided explicit conditions under which the approach of conditioning only on the pre-treatment value of the bad control suffices for identification.  We also provided a generalization of that approach based on a version of unconfoundedness for the bad control.

{
\singlespacing
\small
\printbibliography
}

\appendix\crefalias{section}{appendix}
\numberwithin{equation}{section}
\numberwithin{proposition}{section}
\numberwithin{theorem}{section}
\numberwithin{corollary}{section}
\numberwithin{assumption}{section}
\numberwithin{remark}{section}

\section{Identification Proofs} \label{app:proofs}

As a starting point for the results below, notice that
\begin{align}
      \ATT &= \E[Y_{t^*}(1) - Y_{t^*}(0) \mid D=1] \nonumber \\
      &= \E[Y_{t^*}(1) - Y_{t^*-1}(0) \mid D=1] - \E[Y_{t^*}(0) - Y_{t^*-1}(0) \mid D=1] \nonumber \\
      &= \E[\Delta Y_{t^*} \mid D=1] - \E[\Delta Y_{t^*}(0) \mid D=1], \label{eqn:att-in-differences}
\end{align}
which holds by the definition of $\ATT$, adding and subtracting $\E[Y_{t^*-1}(0) \mid D=1]$, and by replacing potential outcomes with their observed counterparts.  \Cref{eqn:att-in-differences} highlights that identification of the $\ATT$ comes down to identifying $\E[\Delta Y_{t^*}(0) \mid D=1]$.

\begin{proof}[\textbf{Proof of \Cref{thm:att-pretreatment-covs}}]

    Notice that
    \begin{align*}
        \E[\Delta Y_{t^*}(0) \mid D=1] &= \E\Big[ \E\big[\Delta Y_{t^*}(0) \mid X_{t^*-1}, Z, D=1 \big] \Bigm| D=1 \Big] \\
        &= \E\Big[ \E\big[ \E[\Delta Y_{t^*}(0) \mid X_{t^*}(0), X_{t^*-1}, Z, D=1 ] \bigm| X_{t^*-1}, Z, D=1 \big] \Bigm| D=1 \Big] \\
        &= \E\Big[ \E\big[ \E[\Delta Y_{t^*}(0) \mid X_{t^*}(0), X_{t^*-1}, Z, D=0 ] \bigm| X_{t^*-1}, Z, D=1 \big] \Bigm| D=1 \Big] \\
        &= \E\Big[ \E\big[ \E[\Delta Y_{t^*}(0) \mid X_{t^*}(0), X_{t^*-1}, Z, D=0] \bigm| X_{t^*-1}, Z, D=0\big]  \Bigm| D=1 \Big] \\
        &= \E\Big[ \E\big[\Delta Y_{t^*}(0) \mid X_{t^*-1}, Z, D=0 \big] \Bigm| D=1 \Big] \\
        &= \E\Big[ m_0(X_{t^*-1}, Z) \Bigm| D=1 \Big]
    \end{align*}
    where the first and second equalities both hold by applying the law of iterated expectations, the third equality holds by \Cref{ass:conditional-parallel-trends}, the fourth equality holds by \Cref{ass:simple-cov-unc}, the fifth equality holds by the law of iterated expectations, and the last equality holds by the definition of $m_0$.  Given this expression for $\E[\Delta Y_{t*}(0) \mid D=1]$, identification of the $\ATT$ follows immediately from \Cref{eqn:att-in-differences}.
\end{proof}

\begin{proof}[\textbf{Proof of \Cref{thm:att-cov-unc}}]
    Notice that
    \begin{align*}
        \E[\Delta Y_{t^*}(0) \mid D=1]
        &= \E\Big[ \E[\Delta Y_{t^*}(0) \mid X_{t^*}(0), X_{t^*-1}, Z, D=1] \Bigm| D=1 \Big] \\
        &= \E\Big[ \E[\Delta Y_{t^*}(0) \mid X_{t^*}(0), X_{t^*-1}, Z, D=0] \Bigm| D=1 \Big] \\
        &= \E\Big[ m_0\big(X_{t^*}(0), X_{t^*-1}, Z\big) \Bigm| D=1 \Big] \\
        &= \E\Big[ \E\big[ m_0\big(X_{t^*}(0), X_{t^*-1}, Z\big) \bigm| X_{t^*-1}, W, Z, D=1 \big] \Bigm| D=1 \Big] \\
        &= \E\Big[ \E\big[\, m_0\big(X_{t^*}(0), X_{t^*-1}, Z\big) \bigm| X_{t^*-1}, W, Z, D=0 \big] \Bigm| D=1 \Big] \\
        &= \E\Big[ \E\big[\, m_0\big(X_{t^*}, X_{t^*-1}, Z\big) \bigm| X_{t^*-1}, W, Z, D=0 \big] \Bigm| D=1 \Big]
    \end{align*}
    where the first equality holds by the law of iterated expectations, the second equality holds by \Cref{ass:conditional-parallel-trends}, the third equality holds by the definition of $m_0$, the fourth equality holds by the law of iterated expectations, the fifth equality holds by \Cref{ass:cov-unc}, and the last equality holds because the untreated potential bad control is observed for the untreated group.
\end{proof}

\begin{lemma} \label{lem:mp1} Under \Cref{ass:staggered,ass:no-anticipation,ass:multi-sampling,ass:multi-conditional-parallel-trends,ass:multi-overlap}, for any post-treatment period $t\geq g$ and any not-yet-treated group $g'>t$,
    \begin{align*}
        ATT(g,t) = \E[Y_t - Y_{g-1} \mid G=g] - \E\left[\E[Y_t - Y_{g-1} \mid \mathbf{X}(0),Z,G=g']\Bigm| G=g\right]
    \end{align*}
\end{lemma}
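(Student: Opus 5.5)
The plan is to follow the two-period argument in \Cref{eqn:att-in-differences} and the proof of \Cref{thm:att-cov-unc}, adapted to a generic base period $g-1$ and a comparison group $g'>t$.

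\emph{Step 1 (reduction).} Write
\begin{align*}
\ATT(g,t) = \E[Y_t(g) - Y_{g-1}(0)\mid G=g] - \E[Y_t(0)-Y_{g-1}(0)\mid G=g].
\end{align*}
By \Cref{ass:staggered,ass:no-anticipation}, $Y_t = Y_t(g)$ for $t\ge g$ and $Y_{g-1}=Y_{g-1}(0)$ within group $g$. So the first term equals $\E[Y_t - Y_{g-1}\mid G=g]$, and it remains to identify $\E[Y_t(0)-Y_{g-1}(0)\mid G=g]$.

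\emph{Step 2 (telescoping plus parallel trends).} Write
\begin{align*}
Y_t(0)-Y_{g-1}(0)=\sum_{s=g}^{t}\Delta Y_s(0).
\end{align*}
By the law of iterated expectations, condition on $(\mathbf{X}(0),Z)$ within $G=g$. For each $s$, \Cref{ass:multi-conditional-parallel-trends} gives $\E[\Delta Y_s(0)\mid \mathbf{X}(0),Z,G=g]=\E[\Delta Y_s(0)\mid \mathbf{X}(0),Z]$. Applying the same assumption with $g'$ in place of $g$ gives $\E[\Delta Y_s(0)\mid \mathbf{X}(0),Z]=\E[\Delta Y_s(0)\mid \mathbf{X}(0),Z,G=g']$. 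Summing back over $s$ yields
\begin{align*}
\E[Y_t(0)-Y_{g-1}(0)\mid G=g]=\E\Big[\E[Y_t(0)-Y_{g-1}(0)\mid \mathbf{X}(0),Z,G=g']\Bigm| G=g\Big].
\end{align*}

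\emph{Step 3 (observables).} Since $g'>t\ge g>g-1$, no anticipation gives $Y_t=Y_t(0)$ and $Y_{g-1}=Y_{g-1}(0)$ for group $g'$. Substituting into the inner expectation gives the stated expression.

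The main point requiring care is well-definedness. The inner conditional expectation for group $g'$ must be defined at the support points of $(\mathbf{X}(0),Z)$ for group $g$. This requires overlap between group $g$ and group $g'$ in $(\mathbf{X}(0),Z)$. \Cref{ass:multi-overlap} as stated only guarantees overlap with the never-treated group, $g'=\infty$. For finite $g'$, I would interpret the lemma as holding where that conditional expectation is defined, or read the overlap assumption as extending to the relevant not-yet-treated group.

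A second subtlety is that for group $g$ the full vector $\mathbf{X}(0)$ is not observed in post-treatment periods. This does not matter here, because the lemma's expression is stated in terms of $\mathbf{X}(0)$. Replacing it with observables is deferred to \Cref{thm:mp} via \Cref{ass:multi-cov-unc}.
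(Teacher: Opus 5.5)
Your proposal is correct and follows essentially the same route as the paper's proof. Both add and subtract $\E[Y_{g-1}(0)\mid G=g]$, telescope into one-period changes, apply the law of iterated expectations on $(\mathbf{X}(0),Z)$, swap $G=g$ for $G=g'$ via \Cref{ass:multi-conditional-parallel-trends}, and recombine the sum using observed outcomes for group $g'$. Your caveat about overlap is also well taken: \Cref{ass:multi-overlap} only guarantees comparison units in the never-treated group, which the paper's proof passes over silently for finite $g'$.
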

\begin{proof}
    Notice that
    \begin{align*}
        \ATT(g,t) &= \E[Y_t(g) - Y_t(0) \mid G=g] \nonumber \\
        &= \E[Y_t(g) - Y_{g-1}(0) \mid G=g] - \E[Y_t(0) - Y_{g-1}(0) \mid G=g] \\
        &= \E[Y_t - Y_{g-1} \mid G=g] - \sum_{s=g}^t \E[Y_s(0) - Y_{s-1}(0) \mid G=g] \\
        &= \E[Y_t - Y_{g-1} \mid G=g] - \sum_{s=g}^t \E\Big[\E[Y_s(0) - Y_{s-1}(0) \mid \mathbf{X}(0),Z,G=g] \Bigm| G=g\Big] \\
        &= \E[Y_t - Y_{g-1} \mid G=g] - \E\left[\sum_{s=g}^t\E[Y_s(0) - Y_{s-1}(0) \mid \mathbf{X}(0),Z,G=g']\Bigm| G=g\right] \\
        &= \E[Y_t - Y_{g-1} \mid G=g] - \E\left[\E[Y_t - Y_{g-1} \mid \mathbf{X}(0),Z,G=g']\Bigm| G=g\right]
    \end{align*}
    where the first equality holds by the definition of $\ATT(g,t)$, the second equality holds by adding and subtracting $\E[Y_{g-1}(0) \mid G=g]$ (the outcome for group $g$ in period $g-1$, which is group $g$'s most recent pre-treatment period), the third equality holds because $Y_t(g)$ and $Y_{g-1}(0)$ are observed outcomes for group $g$ and by adding and subtracting $\E[Y_s(0) \mid G=g]$ for $s=g, \ldots t-1$, the fourth equality holds by the law of iterated expectations, the fifth equality changes the order of the summation and expectation and holds by \Cref{ass:multi-conditional-parallel-trends}, and the last equality holds by canceling all the duplicate conditional expectations from $s=g, \ldots, t-1$.
\end{proof}

\begin{lemma} \label{lem:mp-cov-unc2} Under \Cref{ass:multi-cov-unc,ass:multi-overlap},
\begin{align*}
    \big(X_{g}(0), \ldots X_\T(0)\big) \independent G \bigm| X_{g-1}(0), \ldots X_{1}(0), W, Z
\end{align*}
\end{lemma}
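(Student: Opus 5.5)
The plan is to build the joint conditional distribution of the future untreated covariates $(X_g(0),\ldots,X_\T(0))$ one period at a time, with a chain-rule (sequential conditioning) argument. Assumption~\ref{ass:multi-cov-unc} will be used at each step. I will show that, for every measurable set, the conditional law of the future block given $(X_{g-1}(0),\ldots,X_1(0),W,Z,G)$ does not depend on $G$. By the usual characterization of conditional independence, that is exactly the claim.

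First, I fix $t\ge g$ and apply Assumption~\ref{ass:multi-cov-unc} at period $t$. It gives $X_t(0)\independent (G, X_{t-2}(0),\ldots,X_1(0)) \mid (X_{t-1}(0),W,Z)$. By the weak union and decomposition properties of conditional independence, this implies
\begin{align*}
X_t(0) \independent G \bigm| X_{t-1}(0),\ldots,X_1(0),W,Z.
\end{align*}
It also implies that the conditional law of $X_t(0)$ given $(X_{t-1}(0),\ldots,X_1(0),W,Z,G)$ equals its conditional law given only $(X_{t-1}(0),W,Z)$. Call that kernel $K_t(\cdot \mid X_{t-1}(0),W,Z)$. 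It is common to all groups.

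Second, I factor the joint conditional law of the future block:
\begin{align*}
\P\big(X_g(0)\in A_g,\ldots,X_\T(0)\in A_\T \bigm| \mathbf{X}_{1:(g-1)}(0),W,Z,G\big)
\end{align*}
I write this as an iterated integral of $K_g,K_{g+1},\ldots,K_\T$, applying the tower property successively. The last period $\T$ is integrated first, conditioning on everything up to $\T-1$ together with $G$, and I then proceed backward. At each stage the inner conditional law is replaced by $K_t$, which is justified by the first step. None of the kernels involve $G$, so the whole expression is a function of $(\mathbf{X}_{1:(g-1)}(0),W,Z)$ only. Extending from rectangles to all product-measurable sets by a $\pi$-$\lambda$ argument gives the claimed independence.

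The main obstacle is the role of Assumption~\ref{ass:multi-overlap}. It is needed only to guarantee that the conditioning events have positive probability for each $G$ value being compared, in particular $G=\infty$. This ensures the conditional laws for different groups are well defined on a common support, so that ``equal across $G$'' makes sense almost surely. I will invoke it to state the equalities almost surely with respect to the conditioning variables' distribution within each group, rather than anything substantive. Note also that the statement conditions on the untreated potential covariates throughout. For units with $G=g$, the pre-period block $\mathbf{X}_{1:(g-1)}(0)$ equals the observed $\mathbf{X}_{1:(g-1)}$ by Assumption~\ref{ass:no-anticipation}, but that observation is used only later, when the lemma is applied, not in this proof.
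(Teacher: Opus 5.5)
Your proof is correct and follows essentially the same route as the paper. The paper omits this proof and points to the argument for \Cref{lem:mp-cov-unc1}: factor the joint conditional law of $(X_g(0),\ldots,X_\T(0))$ period by period via the chain rule, use \Cref{ass:multi-cov-unc} to drop $G$ from each one-step factor, and recombine. Your weak-union/decomposition step and $\pi$-$\lambda$ extension from rectangles make rigorous what the paper does with CDFs, and you are right that overlap plays no substantive role in either argument.
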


\begin{proof}
    The proof follows a very similar argument as for \Cref{lem:mp-cov-unc1} below and is therefore omitted.
\end{proof}

\begin{proof}[\textbf{Proof of \Cref{thm:mp}}]
    Starting from the second term in \Cref{lem:mp1} and taking $g' = \infty$, notice that
    {\small
        \begin{align}
            & \E\left[\E[Y_t - Y_{g-1} \mid \mathbf{X}_{g:\T}(0),\mathbf{X}_{1:(g-1)},Z,G=\infty]\Bigm| G=g\right] \nonumber \\
            & \hspace{0pt} = \E\left[\E\big[\E[Y_t - Y_{g-1} \mid \mathbf{X}_{g:\T},\mathbf{X}_{1:(g-1)},Z,G=\infty]\bigm| \mathbf{X}_{1:(g-1)},W,Z,G=g \big] \Bigm| G=g\right] \nonumber \\
            & \hspace{0pt} = \E\left[\E\big[\E[Y_t - Y_{g-1} \mid \mathbf{X}_{g:\T},\mathbf{X}_{1:(g-1)},Z,G=\infty]\bigm| \mathbf{X}_{1:(g-1)},W,Z,G=\infty \big] \Bigm| G=g\right] \label{eqn:mp-proof3}
        \end{align}
    }where the first equality holds by the law of iterated expectations and because $\mathbf{X}_{g:\T} = \mathbf{X}_{g:\T}(0)$ for the never-treated group, and the second equality holds under \Cref{ass:multi-cov-unc} by \Cref{lem:mp-cov-unc2}.  Plugging this expression into \Cref{lem:mp1} implies the result.
\end{proof}

\begin{lemma} \label{lem:mp-cov-unc1} Under \Cref{ass:multi-cov-unc,ass:multi-overlap},
\begin{align*}
    \big(X_{g}(0), \ldots X_\T(0)\big) \independent G \bigm| X_{g-1}(0), W, Z
\end{align*}
\end{lemma}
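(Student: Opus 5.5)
The plan is to prove the conditional independence by factorizing the joint conditional law of $\big(X_g(0),\ldots,X_\T(0)\big)$ given $\big(G, X_{g-1}(0), W, Z\big)$ period by period, using the Markov structure built into \Cref{ass:multi-cov-unc}. Concretely, I will show that for every bounded measurable $h$,
\begin{align*}
\E\big[h\big(X_g(0),\ldots,X_\T(0)\big) \bigm| G, X_{g-1}(0), W, Z\big] = \E\big[h\big(X_g(0),\ldots,X_\T(0)\big) \bigm| X_{g-1}(0), W, Z\big].
\end{align*}
It suffices to take product functions $h = \prod_{s=g}^{\T} h_s(X_s(0))$ (a monotone-class argument then extends the result to general $h$), or equivalently to argue at the level of conditional distributions via the chain rule.

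The argument will be a backward induction on the last period. For $s$ from $\T$ down to $g$, consider the innermost factor $h_\T(X_\T(0))$. Conditioning additionally on $\big(X_{\T-1}(0),\ldots,X_{g}(0)\big)$ and using the law of iterated expectations, \Cref{ass:multi-cov-unc} applied at $t=\T$ gives
\begin{align*}
\E\big[h_\T(X_\T(0)) \bigm| G, X_{\T-1}(0),\ldots,X_{g-1}(0), W, Z\big] = \E\big[h_\T(X_\T(0)) \bigm| X_{\T-1}(0), W, Z\big] =: \kappa_\T\big(X_{\T-1}(0),W,Z\big),
\end{align*}
since $X_\T(0)$ is independent of $\big(G, X_{\T-2}(0),\ldots,X_1(0)\big)$ given $\big(X_{\T-1}(0),W,Z\big)$, which in particular implies independence from the subvector $\big(G, X_{\T-2}(0),\ldots,X_{g-1}(0)\big)$. (A weak-union step is used here: independence of $X_t(0)$ from the full history plus $G$ given $(X_{t-1}(0),W,Z)$ implies conditional independence from $G$ given $(X_{t-1}(0),\ldots,X_{g-1}(0),W,Z)$.) This replaces the last factor by a function of $X_{\T-1}(0)$ that does not depend on $G$. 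Absorbing it into $h_{\T-1}$ and repeating the argument down to period $g$, where the conditioning set reduces to $\big(G, X_{g-1}(0), W, Z\big)$, yields a final function of $\big(X_{g-1}(0),W,Z\big)$ alone. Running the identical chain without $G$ gives the same function, which establishes the display above.

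Overlap (\Cref{ass:multi-overlap}) is used only to make the conditional expectations given $G=g$ (and in particular $G=\infty$) well defined on the relevant support, so that the conditional laws being compared exist. The main obstacle is the bookkeeping in the conditional-independence manipulations. Each step must use only the Markov property at lag one together with independence from $G$, and the conditioning sets must be handled carefully, especially the weak-union step that drops $X_{1}(0),\ldots,X_{g-2}(0)$ from the conditioning set while keeping $W,Z$. I expect this to go through cleanly, since \Cref{ass:multi-cov-unc} conditions on the immediately preceding value at every step.
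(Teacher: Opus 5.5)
Your proposal is correct and is essentially the paper's argument: the paper likewise factorizes the conditional law of $\big(X_g(0),\ldots,X_\T(0)\big)$ given $\big(X_{g-1}(0),W,Z,G\big)$ by the chain rule into one-period conditionals, and then drops $G$ from each factor using \Cref{ass:multi-cov-unc}, exactly as your backward induction does. Your version works with bounded test functions plus a monotone-class step rather than conditional CDFs, and it states the weak-union step explicitly where the paper leaves it implicit, so it is a slightly more careful rendering of the same computation.
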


\begin{proof}
    Notice that, for any $\mathbf{x} \in \text{support}\big(\mathbf{X}_{g:\T}(0)\big)$,
    \begin{align*}
        & \P\big(\mathbf{X}_{g:\T}(0) \leq \mathbf{x} \mid X_{g-1}(0), W, Z, G\big) \\
        & \hspace{25pt}= \int \indicator{\tilde{\mathbf{x}} \leq \mathbf{x}} \, d\F_{X_g(0),\ldots,X_\T(0) \mid X_{g-1}(0), W, Z, G}(\tilde{x}_g, \ldots, \tilde{x}_\T \mid \cdot) \\
        & \hspace{25pt}= \int \indicator{\tilde{\mathbf{x}} \leq \mathbf{x}} \, d\F_{X_\T(0) \mid X_{\T-1}(0), \ldots, X_{g-1}(0), W, Z, G}(\tilde{x}_\T \mid \cdot) \cdots d\F_{X_g(0) \mid X_{g-1}(0), W, Z, G}(\tilde{x}_g \mid \cdot) \\
        & \hspace{25pt}= \int \indicator{\tilde{\mathbf{x}} \leq \mathbf{x}} \, d\F_{X_\T(0) \mid X_{\T-1}(0), \ldots, X_{g-1}(0), W, Z}(\tilde{x}_\T \mid \cdot) \cdots d\F_{X_g(0) \mid X_{g-1}(0), W, Z}(\tilde{x}_g \mid \cdot) \\
        & \hspace{25pt} = \P\big(\mathbf{X}_{g:\T}(0) \leq \mathbf{x} \mid X_{g-1}(0), W, Z\big)
    \end{align*}
    where the first equality holds by writing the probability as an integral and where $\tilde{\mathbf{x}} := (\tilde{x}_g, \ldots, \tilde{x}_\T)'$, the second equality holds by repeatedly applying the definition of conditional probability, the third equality holds by \Cref{ass:multi-cov-unc}, and the last equality holds by combining terms.  This implies the result.
\end{proof}

\begin{proof}[\textbf{Proof of \Cref{prop:mp-lower-dimension}}]
    Using a similar argument as in the proof of \Cref{lem:mp1} and by invoking \Cref{ass:mp-dim-reduction}, one can show that
    { \small
    \begin{align}
        \ATT(g,t) &= \E[Y_t - Y_{g-1} \mid G=g] - \E\left[\E[Y_t - Y_{g-1} \mid X_t(0), X_{g-1},Z,D_t=0]\Bigm| G=g\right] \label{eqn:mp-prop1}
    \end{align}
    }where the key remaining challenge is that the distribution of $X_t(0) \mid X_{g-1}, Z, G=g$ is not directly identified since $t$ is a post-treatment period for group $g$.  However, for the second term in \Cref{eqn:mp-prop1}, it follows that
    \begin{align*}
        & \E\left[\E[Y_t - Y_{g-1} \mid X_t(0), X_{g-1},Z,D_t=0]\Bigm| G=g\right] \\
        & \hspace{25pt} = \E\left[ \E\big[ \E[Y_t - Y_{g-1} \mid X_t(0), X_{g-1},Z,D_t=0] \bigm| X_{g-1}, W, Z, G=g \big] \Bigm| G=g\right] \\
        & \hspace{25pt} = \E\left[ \E\big[ \E[Y_t - Y_{g-1} \mid X_t(0), X_{g-1},Z,D_t=0] \bigm| X_{g-1}, W, Z, D_t=0 \big] \Bigm| G=g\right]
    \end{align*}
    where the first equality holds by the law of iterated expectations and the second equality holds by \Cref{lem:mp-cov-unc1}.
\end{proof}

\begin{lemma} \label{lem:mp-simple-cov-unc} Under \Cref{ass:multi-overlap,ass:multi-simple-cov-unc},
\begin{align*}
    \big(X_{g}(0), \ldots X_\T(0)\big) \independent G \bigm| X_{g-1}(0), Z
\end{align*}
\end{lemma}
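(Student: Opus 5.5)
The plan is to mirror the proof of \Cref{lem:mp-cov-unc1} line by line, with $W$ dropped from every conditioning set and \Cref{ass:multi-simple-cov-unc} used in place of \Cref{ass:multi-cov-unc}. The idea is to factor the joint conditional distribution of the post-period covariate path into one-step conditionals. The Markov-plus-independence structure of \Cref{ass:multi-simple-cov-unc} then removes $G$ from each factor.

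Fix $\mathbf{x}$ in the support of $\mathbf{X}_{g:\T}(0)$. First, write
\begin{align*}
\P\big(\mathbf{X}_{g:\T}(0) \leq \mathbf{x} \mid X_{g-1}(0), Z, G\big) = \int \indicator{\tilde{\mathbf{x}} \leq \mathbf{x}} \, d\F_{X_g(0),\ldots,X_\T(0) \mid X_{g-1}(0), Z, G}.
\end{align*}
By the chain rule for conditional distributions, factor the joint measure as a product over $s=g,\ldots,\T$ of $d\F_{X_s(0)\mid X_{s-1}(0),\ldots,X_{g-1}(0),Z,G}$.

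For each factor, \Cref{ass:multi-simple-cov-unc} at period $s$ states that $X_s(0)$ is jointly independent of $(G, X_{s-2}(0),\ldots,X_1(0))$ given $(X_{s-1}(0),Z)$. Joint independence implies independence of $X_s(0)$ from $G$ conditional on $(X_{s-1}(0),\ldots,X_{g-1}(0),Z)$; this is the standard weak-union property, applied to the subvector $(G, X_{s-2}(0),\ldots,X_{g-1}(0))$. Hence each factor equals $d\F_{X_s(0)\mid X_{s-1}(0),\ldots,X_{g-1}(0),Z}$, with no dependence on $G$. Recombining the product then gives $\P(\mathbf{X}_{g:\T}(0)\leq \mathbf{x}\mid X_{g-1}(0),Z)$, which is the claimed conditional independence.

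The only delicate step is this weak-union step. One must check that the conditioning set in each factor is contained in what \Cref{ass:multi-simple-cov-unc} allows, namely $X_{s-1}(0)$ and $Z$ plus variables lying inside the independent block. That holds because the block includes all earlier covariates. The overlap assumption \Cref{ass:multi-overlap} is used only to ensure the conditional distributions given $G$ are well defined on the relevant support, so that the conditioning on $G$ is meaningful for each $g$.
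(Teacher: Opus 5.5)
Your proposal is correct and follows the same route as the paper. The paper omits this proof by pointing to the argument for \Cref{lem:mp-cov-unc1}: factor the conditional law of $\mathbf{X}_{g:\T}(0)$ into one-step conditionals, drop $G$ from each factor using \Cref{ass:multi-simple-cov-unc} (with $W$ removed), and recombine. Your explicit decomposition/weak-union justification for removing $G$ from each factor, and your note that overlap plays no substantive role, are accurate refinements of a step the paper simply attributes to the assumption.
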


\begin{proof}
    The proof is very similar to that of \Cref{lem:mp-cov-unc1} and is therefore omitted.
\end{proof}

\begin{proof}[\textbf{Proof of \Cref{prop:mp-simple-cov-unc}}]
    To start with, since \Cref{lem:mp1} holds for all not yet treated groups, we have that
    \begin{align}
        ATT(g,t) = \E[Y_t - Y_{g-1} \mid G=g] - \E\left[\E[Y_t - Y_{g-1} \mid \mathbf{X}(0),Z,D_t=0]\Bigm| G=g\right] \label{eqn:mp1-res}
    \end{align}
    First, if \Cref{ass:mp-bad-control-redundancy} holds, then the second term in \Cref{eqn:mp1-res} is given by
    \begin{align*}
        \E\left[\E[Y_t - Y_{g-1} \mid \mathbf{X}(0),Z,D_t=0]\Bigm| G=g\right] = \E\left[\E[Y_t - Y_{g-1} \mid X_{g-1},Z,D_t=0]\Bigm| G=g\right]
    \end{align*}
    which implies the result in this case.

    If, instead, \Cref{ass:multi-simple-cov-unc} holds, for the second term in \Cref{eqn:mp1-res}, it follows that
    \begin{align*}
        & \E\left[\E[Y_t - Y_{g-1} \mid \mathbf{X}(0),Z,D_t=0]\Bigm| G=g\right] \\
        & \hspace{50pt} = \E\left[\E[Y_t - Y_{g-1} \mid X_t(0),X_{g-1},Z,D_t=0]\Bigm| G=g\right] \\
        & \hspace{50pt} = \E\left[ \E\big[ \E[Y_t - Y_{g-1} \mid X_t(0), X_{g-1}, Z,D_t=0] \bigm| X_{g-1}, Z, G=g\big] \Bigm| G=g\right] \\
        & \hspace{50pt} = \E\left[ \E\big[ \E[Y_t - Y_{g-1} \mid X_t(0), X_{g-1}, Z,D_t=0] \bigm| X_{g-1}, Z, D_t=0\big] \Bigm| G=g\right] \\
        & \hspace{50pt} = \E\left[\E\big[Y_t - Y_{g-1} \bigm|  X_{g-1},Z,D_t=0 \big] \Bigm| G=g\right]
    \end{align*}
    where the first equality holds by \Cref{ass:mp-dim-reduction}, the second equality holds by the law of iterated expectations, the third equality holds by \Cref{lem:mp-simple-cov-unc}, and the last equality holds by the law of iterated expectations.  Plugging this expression into \Cref{eqn:mp1-res} implies the result.
\end{proof}

\end{document}